\title{Completeness of Graphical Languages for Mixed States Quantum Mechanics} %TODO Please add
\titlerunning{Completeness of Graphical Languages for Mixed~States Quantum Mechanics}%optional, please use if title is longer than one line
\author{Titouan Carette\hfill Emmanuel Jeandel\hfill Simon Perdrix\hfill Renaud Vilmart
\institute{Universit\'e de Lorraine, CNRS, Inria, LORIA, F 54000 Nancy, France}
\\\email{\{firstname.name\}@loria.fr}
}
\authorrunning{T.Carette, E. Jeandel, S.Perdrix and R.Vilmart}%TODO mandatory. First: Use abbreviated first/middle names. Second (only in severe cases): Use first author plus 'et al.'
\tikzstyle{every picture}=[baseline=-0.25em]
\tikzstyle{none}=[inner sep=0mm]
\tikzstyle{zxnode}=[shape=circle, minimum width=.25cm, inner sep=0.5pt, font=\footnotesize, draw=black]
\tikzstyle{gn}=[zxnode ,fill=green]
\tikzstyle{rn}=[zxnode ,fill=red]
\tikzstyle{bg}=[inner sep=0.5pt,minimum width=.25cm,fill=green,draw=white,very thick,shape=circle]
\tikzstyle{br}=[inner sep=0.5pt,minimum width=.25cm,fill=red,draw=white,very thick,shape=circle]
\tikzstyle{H box}=[rectangle,fill=yellow,draw=black,xscale=1,yscale=1,font=\footnotesize,inner sep=1.2pt,minimum width=0.15cm,minimum height=0.15cm]
\tikzstyle{ug}=[regular polygon, regular polygon sides=3, fill=red,draw=black,inner sep = 0pt,minimum width=1em]
\tikzstyle{black dot}=[inner sep=0.7mm,minimum width=0pt,minimum height=0pt,fill=black,draw=black,shape=circle]
\tikzstyle{dot}=[black dot]
\tikzstyle{white dot}=[dot,fill=white]
\tikzstyle{zwcross}=[diamond, draw, fill=gray, minimum width=0em, inner sep=1.5pt]
\tikzstyle{arrow}=[decoration={markings,mark=at position 1 with
\tikzstyle{st}=[star,star points = 5, fill=white,draw=black,inner sep = 1.2pt,line width=1.2pt]
\tikzstyle{uglabel}=[rounded corners=0.2em,fill=green!20,inner sep=0.1em,font=\scriptsize, anchor=west, xshift=-0.2em, yshift=0,opacity=1]
\tikzstyle{braid}=[white, double=black, , line width=1.3pt, double distance=0.4 pt]
\tikzstyle{Box}=[rectangle,fill=white,draw=black,xscale=1,yscale=1,font=\footnotesize,inner sep=1.2pt,minimum width=0.4cm,minimum height=0.4cm]
\tikzstyle{none}=[inner sep=0mm]
\tikzstyle{every loop}=[]
\newcommand{\tikzfig}[1]{
	\IfFileExists{./figures/#1.tikz}
	{\input{./figures/#1.tikz}}
	%	{\begin{array}{|c|}
	%		\hline
	%		\color{red}#1\\
	%		\hline
	%	\end{array}}
	{\tikz[baseline=-0.5em]{\node[draw=red,font=\color{red},fill=red!10!white] {$#1$};}}
}
\def\fig{}
\newif\ifexternal
\let\oldtikzfig\tikzfig
\renewcommand{\tikzfig}[1]{
	\tikzsetnextfilename{#1}
	\oldtikzfig{#1}}
\newcommand{\ground}
{
\ifexternal\tikzsetnextfilename{ground}\fi
	\begin{tikzpicture}[circuit ee IEC,yscale=0.9,xscale=0.8]
	\draw[solid,arrows=-] (0,1ex) to (0,0) node[anchor=center,ground,rotate=-90,xshift=.66ex] {};
	\end{tikzpicture}}%
\newcommand{\bvdots}{ \tikz[baseline, every node/.style={inner sep=0}]{ \node at (0,0){.}; \node at (0,-3pt){.}; \node at (0,3pt){.}; } }
\newcommand{\sground}{\scalebox{0.5}{\!\ground}}
\newcommand{\cp  }{\sim_{\textnormal{cp}}}
\newcommand{\iso}{\sim_{\textnormal{iso}}}
\newcommand{\isop }{\sim_{\textnormal{iso}}^{+}}
\newcommand{\term}[1]{\text{ }!_{#1}}
\newcommand{\disc}[1]{\overset{#1}{\ground}}
\newcommand{\cterm}[1]{#1^{\term{}}}
\newcommand{\ciso}[1]{#1_{\textnormal{iso}}}
\newcommand{\cdisc}[1]{#1^{\sground}}
\def\rightangle{\tikz{\draw (0,0) -- (0,0.2) -- (0.2,0.2);}}
\newcommand{\eq}[2][~~]{
	#1
	\underset{\substack{#2}}{=}
	#1
}
\newcommand{\interp}[1]{\left\llbracket #1 \right\rrbracket}
\newcommand{\frag}[1]{$\frac{\pi}{#1}$-frag\-ment}
\newcommand{\titlerule}[1]{\noindent
	\begin{center}
		\rule{(\columnwidth-\widthof{#1})/2}{0.5pt}#1\rule{(\columnwidth-\widthof{#1})/2}{0.5pt}
\end{center}}
\newcommand{\annoted}[3]{{\scriptstyle #1}\left\lbrace\mathrlap{\phantom{#3}}\right.\overbrace{#3}^{#2}}
\newcommand{\bra}[1]{\ensuremath{\left\langle #1 \right|}}
\newcommand{\ket}[1]{\ensuremath{\left|  #1 \right\rangle}}
\newcommand{\tr}{\operatorname{tr}}
\def \zx {\textnormal{ZX}\xspace}
\begin{document}

\maketitle

\begin{abstract}	
	
	There exist several graphical languages for quantum information processing, like quantum circuits, ZX-Calculus, ZW-Calculus, etc. Each of these languages forms a $\dagger$-symmetric monoidal category ($\dagger$-SMC) and comes with an interpretation functor to the $\dagger$-SMC of (finite dimension) Hilbert spaces. In the recent years, one of the main achievements of the categorical approach to quantum mechanics has been to provide  
	several equational theories for most of these graphical languages, making them complete for various fragments of pure quantum mechanics. 
	
	We address the question of the extension of these languages beyond pure quantum mechanics, in order to reason on mixed states and general quantum operations, i.e.~completely positive maps. Intuitively, such an extension relies on the axiomatisation of a \emph{discard} map which allows one to get rid of a quantum system, operation which is not allowed in pure quantum mechanics. 
	
	We introduce a new construction, the \emph{discard construction}, which transforms any $\dagger$-symmetric monoidal category
	  into a symmetric monoidal category 
	  equipped with a discard map. Roughly speaking this construction consists in making any isometry causal. 
	
	Using this construction we provide an extension for several graphical languages that we prove to be complete for general quantum operations. However this construction fails for some fringe cases like the Clifford+T quantum mechanics, as the category does not have enough isometries.

%		
%	We compare this construction exiisting CPM construction. 
%
%	
%	
%	
%	We show that this construction
%	
%	
%	
%	%The construction $\cdisc{\mathbf{C}}$  is nothing but the pushout of $\mathbf{C}$ and $\cterm{\ciso{\mathbf{C}}}$, where $\cterm{\ciso{\mathbf{C}}}$ is the affine completion of $\ciso{\mathbf{C}}$ -- the subcategory of the isomorphisms of $\mathbf{C}$ -- such that the unit object is terminal. 
%	
%	The discard construction $\textup{C}^{\sground}$ fails in general to be an environment structure for $\textup{C}$, and as a consequence is not equivalent to the  $\operatorname{CPM}$-constructions. We provide a necessary and sufficient condition on $\mathbf{C}$ for $\cdisc{\mathbf{C}}$ to be an environment structure, and show that this condition is satisfied for pure qubit quantum mechanics and for stabiliser quantum mechanics. However, the so-called Clifford+T quantum mechanics does not satisfy this condition. 
%	
%	Finally, we use the discard construction to provide simple extensions of various graphical languages, including the ZX-calculus, that we show to be complete for quantum operations in general as well as its stabiliser fragment. We leave as an open question the existence of a complete axiomatisation for the Clifford+T fragment of the quantum operations.

\end{abstract}

\section{Introduction}

%One of the main achievement of the categorical quantum mechanics is to provide graphical languages, like the ZX-calculus \cite{CD09} which capture the fundamental properties of quantum mechanics and quantum information processing. The ZX-calculus 

%%%%

Graphical languages that speak of quantum information can be formalised through the notion of symmetric monoidal categories. Hence, it has a nice graphical representation using string diagrams \cite{selinger2010survey}. Qubits are represented by wires, and morphisms by graphical elements where some wires go in, and some others go out, just as in quantum circuits (which is actually a particular case of symmetric monoidal category), and where these graphical elements can be composed either in sequence (usual composition) or in parallel (tensor product). They usually come with an additional structure, a contravariant functor called dagger. 

Examples of graphical languages for quantum mechanics and quantum
computing are the quantum circuits and the   ZX-Calculus
\cite{interacting}. Some variants of the ZX-calculus have been
introduced more recently like the ZW-calculus \cite{zw} and the
ZH-calculus \cite{ZH}. All these languages are defined using
generators (elementary gates) and come with an interpretation functor
which associates with any diagram a pure quantum evolution, i.e. a
morphism in the category of Hilbert spaces. Given a graphical
language, there are generally several ways to represent a quantum
evolution, thus a graphical language is also equipped with an
equational theory which allows to transform a diagram into another
equivalent diagram.  A fundamental property,  generally hard to prove, is the completeness of the language: given two diagrams representing the same quantum evolution, one can be turned into the other using only the transformation rules in the theory.

% 
%
%An example of such language is the ZX-Calculus \cite{interacting}, that furthermore harnesses additional structure, namely dagger, and compact closed. This trait recently allowed the language to get unprecedented results, noticeably regarding the problem of completeness \cite{JPV}.
%
%Props can be presented a set of generators and equations \cite{PhD.Zanasi}. This is really what constitutes a language. In the case of quantum computing, the generators, when composed, allow us to represent a quantum evolution. In this manner, the objects we will manipulate can be seen as open graphs with additional properties. The equations, on the other hand, indicate what transformations are allowed between graphs.
%
%In our case, since the graphs are meant to represent a quantum evolution, and since quantum is completely captured by linear algebra, we also equip the languages with a so-called standard interpretation, that maps any of these graphs to the quantum evolution it represents. There usually are several ways to represent the same quantum evolution. For instance, anyone familiar with quantum circuits knows that two consecutive CNots result in the same evolution as the identity. An axiomatisation -- the set of equations in the theory -- precisely addresses this problem, and is called complete when it captures \emph{all} one is supposed to be able to do. In other words, the axiomatisation is complete if, whenever two graphs represent the same evolution, they can be turned into one another using only the transformation rules in the theory.

The languages considered have usually been built so as to be able to represent any pure quantum evolution. In this case, the language is called universal for pure quantum mechanics. The hardness of the completeness problem, as well as constraints given by the complexity to physically achieve some gates, focused the research on some restrictions of the languages. On the one hand, finite presentations for the quantum circuits were shown to be complete for some restrictions -- namely Clifford \cite{clifford-circuits}, one-qubit Clifford+T \cite{matsumoto-amano}, two-qubit Clifford+T \cite{SelingerBian}, CNot-dihedral \cite{cnot-dihedral} --, however none of these restrictions is universal, nor approximately universal. Regarding the ZX-calculus, completeness results exist for non-universal restrictions of the ZX-Calculus \cite{pi_2-complete,pi_4-single-qubit,2-qubits-zx,pivoting}, but also for the many-qubit Clifford+T ZX-Calculus \cite{JPV}, which was the first completeness result for an approximately universal fragment of the language. Then complete theories have been introduced  for the universal ZX-Calculus \cite{HNW,JPV-universal,ZXNormalForm,euler-zx} and ZW-Calculus \cite{Amar,HNW}. The completeness of the graphical languages for pure quantum mechanics is one of the main achievements of the categorical approach to quantum mechanics, and is the cornerstone for the application of this formalism in many areas of quantum information processing. 
The ZX-Calculus already proved to be useful for quantum information processing \cite{picturing-qp} (e.g. measure\-ment-based quantum computing  \cite{duncan2013mbqc,mbqc,horsman2011quantum}, quantum codes \cite{chancellor2016coherent,de2017zx,verifying-color-code,duncan2014verifying}, circuit optimisation \cite{duncan2019graph}, foundations \cite{toy-model-graph,duncan2016hopf} ...). Moreover the ZX-calculus can be concretely used through  two softwares: Quantomatic  \cite{kissinger2015quantomatic}  and PyZX \cite{pyzx}.

The existence of complete graphical languages beyond pure quantum mechanics for more general, not necessarily pure, quantum evolutions is an open question that we address in the present paper. 

While pure quantum evolutions correspond to linear maps over Hilbert
spaces, probability distributions over quantum states as well as some
quantum evolutions like discarding a quantum system can be
represented, following the van Neumann approach, by means of density
matrices and completely positive maps. The category of completely
positive maps has been already studied \cite{Selinger:2004:TQP:1014327.1014330}, and in particular the connections between   
the pure and the van Neumann approaches is a central question in
categorical quantum mechanics. Selinger introduced a  construction
called CPM to turn a category for pure quantum mechanics into a category for density matrices and completely positive maps \cite{Selinger-CPM}. Another approach to relate pure quantum mechanics to the general one is the notion of environment structure \cite{mixed-states-axioms,coecke2016pictures,environment}. The notion of \emph{purification} is central in the definition of environment structure. The CPM-construction and the environment structure approaches have been proved to be equivalent  \cite{coecke2016pictures}. 

In terms of graphical languages, the environment structure approach cannot be used in a straightforward way to extend a graphical language beyond pure quantum mechanics. Roughly speaking the environment structure approach provides second order axioms which associates with any equation on arbitrary (non necessarily pure) evolutions an equivalent equation on pure evolutions. Such a second order axiom cannot be easily handled by a equational theory on diagrams. Regarding the CPM-construction, the main property which has been exploited in \cite{picturing-qp} is that CPM($\textbf{C}$) is essentially a subcategory of $\textbf C$, thus one can use a graphical language which has been designed for $\textbf C$ in order to represent morphisms in CPM($\textbf{C}$): Given a complete graphical language for $\textbf C$, we can use a subset of the pure diagrams to represent the evolutions in CPM($\textbf{C}$). The main caveat of this approach is that this subset is not necessarily closed under the equational theory on pure diagrams, and as a consequence does not provide a complete graphical language for CPM($\textbf{C}$).

\vspace{0.1cm}

\noindent {\bf Our contributions.} We introduce a new construction, the \emph{discard construction}, which transforms any $\dagger$-symmetric monoidal category %$\mathbf C$
	  into a symmetric monoidal category %$\cdisc{\mathbf{C}}$ 
	  equipped with a discard map.  Roughly speaking this construction consists in making any isometry causal. Indeed, in quantum mechanics,  the isometries (linear maps $U$ such $U^\dagger \circ U = I$) are known to be causal, i.e. applying $U$ and then discard the subsystem on which it has been applied is equivalent to discarding the subsystem straightaway. Concretely, the discard construction proceeds as follows: first the discard is added to the subcategory of isometries, making the unit of the tensor  a terminal object in this sub-category, as pointed out in \cite{huotuniversal}. Then the discard construction is obtained as the pushout of the resulting category and the initial one.
	  
	  We show that the discard construction does not always produce an environment structure for the original category, and thus is not equivalent to the CPM construction. We show that a necessary and sufficient condition for the two constructions to be equivalent is that the initial category has enough isometries. We show that most of the categories usually used in the context of the categorical quantum mechanics, like $\textbf{FHilb}$ and $\textbf{Stab}$, do have enough isometries, however  $\textbf{Clifford+T}$ does not.

	Finally, we show that the discard construction provide a simple recipe to extend graphical languages beyond pur quantum mechanics. We provide an extension for several graphical languages that we prove to be complete for general quantum operations. %As the discard construction consists in making isometries causal, 
	
	%However this construction fails for some fringe cases like the Clifford+T quantum mechanics, as the category does not have enough isometries. 

\vspace{0.1cm}
	  
\noindent	  {\bf Structure of the paper.} In section \ref{sec:background}, we review some categorical notions used in categorical quantum mechanics. Section \ref{sec:discard} is dedicated to the definition of the discard construction and the relation with the CPM construction. Finally, in section \ref{sec:graphical} we use  the discard construction to  extend the ZX-calculus to make it complete for general (not necessarily pure) quantum evolutions. The construction is also applied to other graphical languages. %  completly positive maps. (and other graphical langugaes) of several graphical language for 

\section{Background}\label{sec:background}

%In this section, we review some categorical notions used in categorical quantum mechanics. 
%In this section we define the different categories we will be working with, the CPM construction and environment structures.

\subsection{Dagger symmetric monoidal categories}
To avoid any size issue, all our categories are small, the homset of a category $\mathbf{C}$ will be denoted $\mathbf{C}[A,B]$.
%We work in the context of strict symmetric monoidal categories (SMC), the strictness is not so restrictive considering the strictification theorem of \cite{mac2013categories}. 
%\begin{definition}
	Recall a \emph{strict symmetric monoidal category} (SMC) $\textup{\bf C}$ is a category together with a tensor product bifunctor $\otimes:\mathbf{C}\times \mathbf{C}\to \mathbf{C}$, a unit object $I$ such that $A\otimes I= I\otimes A= A$ and $A\otimes (B\otimes C)=(A\otimes B)\otimes C$, and a symmetry natural isomorphism: $\sigma_{A,B}:A\otimes B\to B\otimes A$ satisfying $\sigma_{A,I}=1_{A}$, $\sigma_{A,B\otimes C}= (1_{B}\otimes \sigma_{A,C}) \circ(\sigma_{A,B}\otimes 1_{C})$, and $\sigma_{A,B}\circ \sigma_{B,A}=1_{B\otimes A}$. A \emph{prop} is an SMC which set of objects is freely spanned by one object. There is an associated notion of strict symmetric monoidal functor $F:\mathbf{C}\to \mathbf{D}$ which preserves unit, tensors and symmetries.
%\end{definition}

\noindent We will use \emph{string diagram} notations for SMC where morphisms are described as boxes and

\centerline{$g \circ f := \tikzfig{fog} \qquad
   f \otimes g := \tikzfig{fotimesg}\qquad
 1_A := \tikzfig{single-line}\phantom{.}^{A} \qquad
   1_I :=  \tikzfig{empty-diagram} \qquad
   \sigma_{A,B} := \tikzfig{sigma}$}

%For  for an exposition see the survey \cite{selinger2010survey}.

%\NOTEs{JE comprends pas pourquoi on dit "The core principle of quantum computing, the conservation of information, is implemented in the notion"..}
%The dagger is a central notion in quantum mechanicsThe core principle of quantum computing, the conservation of information, is implemented in the notion of $\dagger$-SMC $\mathbf{C}$ which is the main ingredient of the categorical quantum mechanics program \cite{abramsky2008categorical}: 
%\begin{definition}
%	A $\dagger$-SMC $\mathbf{C}$, is an SMC with an i.o.o. (identity on object) involutive SMC-functor $(.)^{\dagger}:\mathbf{C}\to \mathbf{C}^{op}$.
%\end{definition}
%\NOTEs{on doit definir $\dagger$ de C dans C comme un contravariant functor}
%\begin{definition}
	A $\dagger$-SMC $\mathbf{C}$, is an SMC with an i.o.o. (identity on object) involutive and contravariant SMC-functor $(.)^{\dagger}:\mathbf{C}\to \mathbf{C}$. 
%\end{definition}
That is, every morphism $f:A\to B$ has a dagger $f^{\dagger}:B\to A$ such that $f^{\dagger\dagger}=f$, moreover the dagger respects the symmetries $\sigma_{A,B}^{\dagger}=\sigma_{B,A}$. The dagger is a central notion in categorical quantum computing and can be used to define specific properties of morphisms:

\vspace{-0.5cm}

\begin{definition}
	$f:A\to B$ is an \emph{isometry} if $f^{\dagger}\circ f=1_{A}$, i.e.~~$\tikzfig{iso}\eq{}\tikzfig{single-line}~$.
\end{definition}

\vspace{-0.4cm}

In this paper most of the categories considered are furthermore compact closed: 
	A dagger compact category ($\dagger$-CC) is a $\dagger$-SMC where every object $A$ has a dual object $A^{*}$ such that for all objects $A$, there are two morphisms $\tikzfig{cup-1}:A\otimes A^*\to I$ and $\tikzfig{cap-1}:I\to A^*\otimes A$ satisfying  
	$\tikzfig{BMC-snake-1}$, $\tikzfig{BMC-snake-2}$ and $\left(\tikzfig{cup-1}\right)^{\dagger}=\tikzfig{SCC-swap-cap}$. 
%	$(\epsilon_A \otimes 1_{A})\circ (1_A \otimes \eta_{A})= 1_A$ and $( 1_{A^*} \otimes \epsilon_A)\circ (\eta_A \otimes 1_{A^*})= 1_{A^*}$
%\end{definition}

%A \emph{prop} is a SMC
%We will use the string diagrams notations for monoidal categories, for an exposition see the survey \cite{selinger2010survey}.

\subsection{Examples}
We are considering two kinds of SMCs in this paper: the categories of quantum evolutions and the graphical languages. 

\noindent {\bf Quantum evolutions.} Pure quantum evolutions correspond the category of Hilbert spaces. We will consider various subcategories of it: %and various of its  sub-categories of it. 
%First, the concrete categories of quantum evolutions:
$\mathbf{FHilb}$ is the  category of finite dimensional Hilbert spaces which objects are $\mathbb C^n$ and morphisms are linear maps. Its tensor is the usual tensor product of vector spaces and its dagger is the adjoint with respect to the usual scalar product. It is the mathematical model for pure quantum mechanics. In quantum information processing, the quantum data are usually carried by qubits, hence $\mathbf{Qubit}$ is the full subcategory of $\mathbf{FHilb}$ with objects of the form $\mathbb{C}^{2^{n}}$. $\textup{\bf Stab}$ is the sub-category of $\mathbf{Qubit}$ which is finitely generated by the Clifford operators: H, S, CNot, the state $\ket 0$, the projector $\bra 0$, and the scalars $2$ and $i$ where:

\centerline{ $\textup H= \frac{1}{\sqrt{2}} {\footnotesize  \begin{pmatrix}1 & 1\\1 & -1\end{pmatrix}}\quad \textup S= {\footnotesize  \begin{pmatrix}1 & 0\\0 & i\end{pmatrix} }\quad \textup{CNot}= {\footnotesize  \begin{pmatrix}1 & 0&0&0\\0&1 & 0&0\\0&0&0&1\\0&0&1&0\end{pmatrix}}\quad \ket 0 =  {\footnotesize  \begin{pmatrix}1 \\0 \end{pmatrix}} \quad \bra 0 =  {\footnotesize  \begin{pmatrix}1 & 0\end{pmatrix}}$}

Those are amongst the most commonly used gates in quantum computation see \cite{nielsen_chuang_2010} for details. 
$\mathbf{Clifford{+}T}$ is the same as $\mathbf{Stab}$ but with the additional generator $T = {1~~~0 \choose ~0~e^{i\frac\pi4}}$, the morphisms of $\mathbf{Clifford{+}T}$ are exactly the matrices with entries in the ring $\mathbb Z[i, \frac 1{\sqrt 2}]$ \cite{JPV}. Contrary to $\textup{\bf Stab}$, $\mathbf{Clifford{+}T}$ is \emph{approximately universal} in the sense that $\forall n,m{\in} \mathbb N$, $\forall f{\in} \mathbf{Qubit}[\mathbb C^{2^n}, \mathbb C^{2^m}]$ and  $\forall \epsilon {>}0$, there exists $g\in \mathbf{Clifford{+}T}[\mathbb C^{2^n}, \mathbb C^{2^m}]$ such that $||f-g||{<}\epsilon$. $\mathbf{FHilb}$,  $\mathbf{Qubit}$, $\mathbf{Clifford{+}T}$, and $\textup{\bf Stab}$ are all $\dagger$-CC. Notice that $\mathbf{Qubit}$, $\mathbf{Clifford{+}T}$, and $\textup{\bf Stab}$ are props, but $\mathbf{FHilb}$ is not. 

%While both  and $\mathbf{Clifford{+}T}$ are finitely generated,  \\
	
	Probability distributions over pure quantum states as well as some quantum evolutions like discarding a quantum system are not \emph{pure} but can be represented, following the van~Neumann approach, by means of density matrices and completely positive maps. Let $\mathbf{CPM}$ be the category of finite dimension completely positive maps which objects are $\mathbb C^n$ and $\mathbf{CPM}[\mathbb C^n,\mathbb C^m] =\{U:\mathbb C^{n\times n} \to \mathbb C^{m\times m} ~|~ U$ is a completely positive linear map$\}$. Similarly to the pure case, one can define various subcategories of $\mathbf{CPM}$. Notice it can be achieved by the CPM construction described in the next section. % that a construction which associates with any $\dagger$-CC category of

%$\mathbf{CPM}$ is the skeletal category of completely positive maps between finite dimensional Hibert spaces. It models general quantum operations. We will also consider various sub-categories of the laters. 

%All the mentionned categories are props.
%
%\begin{definition}
%	A \emph{prop} is an SMC which set of objects is freely spanned by one object.
%\end{definition}

\vspace{0.2cm}

\noindent {\bf Graphical languages.} The second kind of categories we are considering in this paper are graphical languages. They are props which come with interpretation functors defining their semantics. A prop is in fact the equivalent of Lawvere theories for symmetric monoidal theories. They can be presented by generators and relations as one would do for usual theories, see \cite{PhD.Zanasi} and \cite{baez-PROP} for a detailed discussion.

%
%\NOTEs{C'est standard de dire que S est la "semantics"? JE dirai que $\interp{.}$ est la semantique et S un domaine... Le plus simple est sans doute de ne pas donner de nom a S. } 
%\NOTEs{universal = surjective (plutot que full)?} 

\begin{definition}
	A \emph{graphical language} $\mathcal{G}$ is a prop presented by a set of \emph{generators} $\Sigma$ and a set of \emph{equations} $E$ together with a function $\interp{.}:\Sigma \to \hom(S)$ called the \emph{interpretation} of $\mathcal{G}$ in $S$. %, where $S$ is called the \emph{semantics} of $\mathcal{G}$. 
	$\mathcal{G}$ is said to be \emph{sound} if $\interp{.}$ defines an interpretation functor $\interp{.}:\mathcal{G} \to S$, and \emph{universal} (resp. \emph{complete}) when this functor is surjective (resp. faithful).
\end{definition}

The ZX-, ZW- and ZH-calculi or the quantum circuits are examples of such categories with semantics in $\mathbf{Qubit}$.
%\NOTEs{A brief description of ZX would be lovely : half a pages for syntax, semantcis and a few equations...}

\subsection{Environment structures and CPM-construction}

Connecting the Hilbert approach -- for pure quantum mechanics -- and the van Neumann approach -- for open systems --  is a central question in categorical quantum mechanics. Selinger pointed out that any $\dagger$-CC for pure quantum mechanics can be turned into a category for density matrices and completely  positive maps via the CPM construction \cite{Selinger-CPM}:

\begin{definition}
	Given a $\dagger$-CC $\mathbf{C}$, let $\operatorname{CPM}(\mathbf{C})$ be the $\dagger$-CC with the same objects as $\mathbf{C}$ such that $\operatorname{CPM}(\mathbf{C})[A,B]=\left\lbrace\tikzfig{BCC-cpm-tensor},f\in~ \textup{\bf C} [A,B\otimes C] \right\rbrace$, where $\tikzfig{BCC-adjoint}$.
\end{definition}

Applying it to $\mathbf{FHilb}$ one obtains the category $\mathbf{CPM}$ of completely positives maps. The CPM construction can also be applied to $\mathbf{Qubit}$, $\mathbf{Clifford{+}T}$, and $\textup{\bf Stab}$. Notice that the CPM-construction has been then extended to the non necessarily compact categories \cite{}. 

\vspace{0.2cm}
Another approach to relate pure quantum mechanics to the general one is the notion of environment structure \cite{mixed-states-axioms,coecke2016pictures,environment}. The notion of \emph{purification} is central in the definition of environment structure. 
Intuitively, it means that (1) there is a discard morphism for every object; (2) any morphism can be purified, i.e. decomposed into a pure morphism followed by a discarding map, and (3) this purification is essentially unique. More formally:

\begin{definition}
	An environment structure for a $\dagger$-CC $\mathbf{C}$ is an CC $\overline{\mathbf{C}}$ with the same objects as $\mathbf{C}$, an i.o.o SMC-functor $\iota:\mathbf{C}\to\overline{\mathbf{C}}$ and for each object $A$ a morphism $\ground_{A}:A\to I$ such that:
	\begin{itemize}
		\item[$(1)$] $\disc{I}=1_{I}$, and for all $A,B:\mathbf{C}$, $\disc{A}\otimes \disc{B}=\disc{A\otimes B}$.
		\item[$(2)$] For all $f:A\to B$ in $\overline{\mathbf{C}}$, there is an $f':A\to B\otimes X$ in $\mathbf{C}$ such that:
		$\tikzfig{excond0}=\tikzfig{excond1}$
		\item[$(3)$] For any $f:A\to B\otimes X$ and $g:A\to B\otimes Y$ in $\mathbf{C}$: $f\cp  g \Leftrightarrow \tikzfig{uncond0}=\tikzfig{uncond1}$\\
		
	\end{itemize}
	where the relation $\cp$ is defined as:
		$f\cp  g \Leftrightarrow \tikzfig{simcpm0}=\tikzfig{simcpm1}$
\end{definition}
Notice that $\cp$ is technically not a relation on morphisms but on
tuples $(A,B,X,f)$ with $f\in \mathbf{C}[A,B\otimes X]$:
$(A,B,X,f) \cp (C,D,Y,g)$ if $A=C, B=D$ and $f,g$ satisfy the
graphical condition represented above. As an abuse of notation, we
write $f \cp g$, as the  other components of the
tuple will be usually obvious from context. We will do the same for
our relation $\iso$ below.

$\textup{\bf CPM}$ is actually an environment structure for the category $\textup{\bf FHilb}$, and more generally for any $\dagger$-CC $\mathbf{C}$, $\operatorname{CPM}(\mathbf{C})$ is an environment structure for $\mathbf{C}$ 	and conversely any environment structure for $\mathbf{C}$ is equivalent to $\operatorname{CPM}(\mathbf{C})$ \cite{coecke2016pictures}. Actually one can notice that $\operatorname{CPM}(\mathbf{C})[A,B]$ is nothing but the set of equivalent classes of $\cp$.

%then  ,  Those two points of view on complete positivity are directly linked.

%\begin{theorem}[\cite{coecke2016pictures}]
%	\label{thm:envcpm}
%	If $\mathbf{C}$ is a $\dagger$-CC then $\operatorname{CPM}(\mathbf{C})$ is an environement structure for $\mathbf{C}$. Conversely any environment structure for $\mathbf{C}$ is equivalent to $\operatorname{CPM}(\mathbf{C})$.
%\end{theorem}

%This link is more clear if we remark that $\operatorname{CPM}(\mathbf{C})[A,B]$ is in fact the set of equivalent classes of $\cp$.

The notion of environment structures has also be generalisation to the non compact case \cite{coecke2016pictures}.  We chose here to focus on the compact case.

%been introduced initially for compact-closed $\dagger$-SMC \cite{environment}. As Theorem \ref{thm:envcpm} suggests, such a structure can be obtained by the $\operatorname{CPM}$ construction. In the non-compact case a generalisation has been proposed in \cite{coecke2016pictures}, using the CP$^\infty$-construction. We chose here to focus on the compact case.

\section{The Discard Construction}\label{sec:discard}
%
%The $\operatorname{CPM}$ construction relies on a meta rule, and equivalence classes. In the case of graphical languages we would like to have a construction that stays in the realm of another graphical language, as tractable as possible. This is the point of the discard construction. For the sake of simplicity we restrict to the case of props. In fact all the categories we mentioned so far, in their skeletal forms, are props.
%
We introduce a new construction, the discard construction which consists in adding a discard map for every object of a $\dagger$-SMC, and thus intuitively transforming a category for pure quantum mechanics into a category for general quantum evolutions. 

% into an SMC with a discard map. 
Causality is a central notion in quantum mechanics which has been axiomatised using a discard map as follows \cite{kissinger2017categorical}: $f:A\to B$ is \emph{causal} if and only if $\tikzfig{causal1} =\tikzfig{causal2}$.  Among the pure quantum evolutions, the isometries are causal evolutions. The discard construction essentially consists in making any isometry causal.  Thus, whereas  the CPM construction relies on completely positive maps and the environment structures on the concept of purification, %by guaranteeing the existence of an essentially unique purification for every quantum evolution, 
the discard construction relies on causality. %, by making any isometry a causal evolution. 

\subsection{Definition}
We introduce the new construction in three steps. First, given a  $\dagger$-SMC, one can consider its subcategory of isometries:

\begin{definition}
	Given a $\dagger$-SMC $\mathbf{C}$, $\ciso{\mathbf{C}}$ is the subcategory with the same object as $\mathbf{C}$ and isometries as morphisms, i.e. for all $A,B:\mathbf{C}$, $\ciso{\mathbf{C}}[A,B]=\{f:\mathbf{C}[A,B], f^{\dagger}\circ f=1_{A}\}$.
\end{definition}

Notice that $\ciso{\mathbf{C}}$ is a SMC but usually not a $\dagger$-SMC. Any $\dagger$-SMC-functor $F:\mathbf{C}\to \mathbf{D}$ between two $\dagger$-SMC can be restricted to their subcategories of isometries leading to an SMC-functor $\ciso{F}:\ciso{\mathbf{C}}\to \ciso{\mathbf{D}}$. Thus there is a restriction functor $\operatorname{iso}:\dagger\mathbf{-SMC}\to \mathbf{SMC}$. Remark that this functor preserves fullness and faithfulness.
One always has an inclusion i.o.o. faithful SMC-functor: $\ciso{i}:\mathbf{C}\to \ciso{\mathbf{C}}$.

In quantum mechanics, isometries are causal evolutions, i.e. applying an isometry and then discarding all outputs is equivalent to discarding the inputs straight away. As pointed out in \cite{huotuniversal}, adding discard maps to the category of isometries would make $I$ a terminal object. Such a category is said to be \emph{affine symmetric monoidal category} (ASMC). We define the affine completion of an SMC:

\begin{definition}
	Given an SMC $\mathbf{C}$, we define $\mathbf{C^{!}}$ as $\mathbf{C}$ with an additional morphism $\term{A}:A\to I$ for each object $A:\mathbf{C}$, such that, %for any morphism $f:\mathbf{C}[A,B]$,
for all $f:\ciso{\mathbf{C}}[A,B]$, $\term{B}\circ f  = \term{A}$. This makes $I$ a terminal object in $\cterm{\mathbf{C}}$, and then $\cterm{\mathbf{C}}$ is an ASMC.
\end{definition}

Again given a functor $F:\mathbf{C}\to \mathbf{D}$, one can define a functor $\cterm{F}:\mathbf{C}^{!}\to \mathbf{D}^{!}$ by $F^{!}(\term{A})=\term{\cterm{i}(F(A))}$ and $F^{!}(f)=i^{!}(F(f))$ for the other morphisms.
In \cite{huotuniversal}, Huot and Staton show that $\mathbf{CPTPM}$, the category of completely positive trace preserving maps, is equivalent to $\cterm{\ciso{\mathbf{FHilb}}}$, thus giving a caracterisation of it via a universal property. We extend this idea to non-trace preserving maps by proceeding to a local affine completion of the subcategory of isometries.

We define the category $\cdisc{\mathbf{C}}$ as the pushout of $\mathbf{C}$ and $\cterm{\ciso{\mathbf{C}}}$:
\begin{definition}
	Given a $\dagger$-SMC $\mathbf{C}$, $\cdisc{\mathbf{C}}$ is defined as the pushout:
	\begin{center}
		\begin{tikzpicture}
		\node (Ciso) at (0,0) {$\ciso{\mathbf{C}}$};
		\node (Cisog) at (0,-1.5) {$\cterm{\ciso{\mathbf{C}}}$};
		\node (C) at (2,0) {$\mathbf{C}$};
		\node (Cg) at (2,-1.5) {$\cdisc{\mathbf{C}}$};
		\draw[right hook->] (Ciso) -- (C) node[midway,above] {$\ciso{i}$};
		\draw[->] (Ciso) -- (Cisog) node[midway,left] {$\cterm{i}$};
		\draw[right hook->] (Cisog) -- (Cg) node[midway,below] {$\iota_{\cterm{\ciso{\mathbf{C}}}}$};
		\draw[->] (C) -- (Cg) node[midway,right] {$\iota_{\mathbf{C}}$};
		\node[above left=-0.6em and -0.6em of Cg] {\rightangle};
		\end{tikzpicture}
	\end{center}
	
\end{definition}

Classical results on enriched categories show that the pushout of two
SMCs always exists. As all our functors are i.o.o, we can also
describe it simply combinatorially.
%The construction of pushouts of any props is described in
%\cite{PhD.Zanasi}.
The objects of $\cdisc{\mathbf{C}}$ are the same as $\mathbf{C}$. Its morphisms are equivalence classes generated by formal composition and tensoring of morphisms in $\cterm{\ciso{\mathbf{C}}}$ and $\mathbf{C}$. The equivalence relation is generated by the equations of both categories augmented with equations $\cterm{i}(f)=\ciso{i}(f)$ for all $f$ in $\ciso{\mathbf{C}}$. The functors $\iota_{\mathbf{C}}$ and $\iota_{\cterm{\ciso{\mathbf{C}}}}$ are the natural ways to embed $\mathbf{C}$ and $\cterm{\ciso{\mathbf{C}}}$.
We will see those formal compositions as string diagrams whose components are morphisms of $\mathbf{C}$ and $\cterm{\ciso{\mathbf{C}}}$ wired to each others. Two diagrams represent the same morphism if we can rewrite one into the other applying the equations of both categories and $\cterm{i}(f)=\ciso{i}(f)$ for all $f$ in $\ciso{\mathbf{C}}$. This forms a well defined SMC.

Since the only morphisms in $\ciso{\mathbf{C}}$ which are not
identified with the morphisms of $\mathbf{C}$ are those that contain $\term{A}$, we can see $\cdisc{\mathbf{C}}$ as $\mathbf{C}$ augmented with discard maps which delete isometries.

\begin{definition}
	The discard map on an object $A$ is defined in $\cdisc{\mathbf{C}}$ by $\disc{A}\coloneqq \iota_{\cterm{\ciso{\mathbf{C}}}}(\term{A})$.
\end{definition}

Notice, that for any isometry $f:A\to B$ in $\cdisc{\mathbf{C}}$, $\tikzfig{causal1} =\tikzfig{causal2}$, thus any isometry is causal. 

%We will show that the discard construction is equivalent to the $\operatorname{CPM}$ construction in some relevant cases.

\subsection{Relation to environment structures and $\operatorname{CPM}$}

In order to compare the $\mathbf{C}^{\sground}$ construction with environment structures and the $\operatorname{CPM}$ construction we need to study in details the purification process in $\mathbf{C}^{\sground}$. 
First notice that any morphism of $\mathbf{C}^{\sground}$ admits a purification:
\begin{lemma}\label{lem:pur} Let $\mathbf{C}$ be a $\dagger$-SMC , 
	For all $f:\mathbf{C}^{\sground}[A,B]$, there is an $X:\mathbf{C}$ and an $f':\mathbf{C}[A,B\otimes X]$ such that $\tikzfig{excond0}= \tikzfig{excond2}$. %$f'$ is a purification of $f$.
\end{lemma}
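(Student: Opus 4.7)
The plan is to proceed by structural induction on the syntactic representative of $f$, exploiting the combinatorial description of the pushout: every morphism of $\cdisc{\mathbf{C}}$ is a string diagram whose atomic blocks come either from $\mathbf{C}$ (via $\iota_{\mathbf{C}}$) or from $\cterm{\ciso{\mathbf{C}}}$ (via $\iota_{\cterm{\ciso{\mathbf{C}}}}$), and the latter are generated by isometries, which are identified with their image in $\mathbf{C}$, together with terminal arrows $\term{A}$. Hence the only genuinely new atoms to treat are the discard maps $\disc{A}$.

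The base cases are immediate: if $f\in \mathbf{C}$ take $X=I$ and $f'=f$; if $f=\disc{A}:A\to I$ take $X=A$ and $f'=1_A$, since then $\disc{A}\circ 1_A=\disc{A}$ yields the required equation.

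For the inductive step, the tensor case $f=g_1\otimes g_2$ is handled by applying the IH to each factor to get $g_i=(1_{B_i}\otimes\disc{X_i})\circ g_i'$, interleaving the discarded wires with the $B_i$-wires using the symmetry of $\mathbf{C}$, and then merging the two discards via the identity $\disc{X_1}\otimes\disc{X_2}=\disc{X_1\otimes X_2}$ inherited from $\cterm{\ciso{\mathbf{C}}}$; this produces a purification with $X=X_1\otimes X_2$. The composition case $f=h\circ g$, with IH giving $g=(1_B\otimes\disc{X})\circ g'$ and $h=(1_C\otimes\disc{Y})\circ h'$, follows from bifunctoriality, which yields $h\circ(1_B\otimes\disc{X})=h\otimes\disc{X}$; expanding $h$ and fusing the two discards one obtains
$$f=(1_C\otimes\disc{Y\otimes X})\circ(h'\otimes 1_X)\circ g',$$
whose rightmost part $(h'\otimes 1_X)\circ g'$ lies entirely in $\mathbf{C}$.

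The induction is well posed because the statement is purely existential: two equivalent syntactic representatives of $f$ share any purification, so the choice of representative is irrelevant. The main obstacle is not conceptual but rather graphical bookkeeping in the composition case, where a discard must be slid past an arbitrary, possibly non-isometric, morphism using bifunctoriality of the tensor and then two discards must be merged into one via $\disc{A}\otimes\disc{B}=\disc{A\otimes B}$; this is precisely the spot where the pushout identification of $\cterm{\ciso{\mathbf{C}}}$ with $\mathbf{C}$ on isometries is genuinely used.
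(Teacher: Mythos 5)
Your proposal is correct and is essentially the paper's own argument: the paper takes a diagram representing $f$ and slides all discards to the bottom right using naturality of the symmetry, leaving a discard-free part $f''$ that lies in the image of $\iota_{\mathbf{C}}$, which is exactly what your structural induction accomplishes case by case (bifunctoriality/naturality to push discards past pure morphisms, then merging them via $\disc{A}\otimes\disc{B}=\disc{A\otimes B}$). Your version is merely a more explicit, inductive rendering of the same one-step diagrammatic rearrangement.
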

%\begin{proof}
\noindent The proof is in appendix at page \pageref{prf:pur}.
%\end{proof}
\vspace{0.15cm}

The purification needs not be unique, however it satisfies an essential uniqueness condition. To state it we define the relation $\iso $.:

\begin{definition}
	Let $\mathbf{C}$ be a $\dagger$-SMC, and two morphisms $f:A\to B\otimes X$, $g:A\to B\otimes Y$, $f\iso g$ if there are two isometries $u:X\to Z$ and $v:Y\to Z$, such that $\tikzfig{simis0-s}=\tikzfig{simis1-s}$.
\end{definition}

Notice that the relation $\iso $ is not transitive, thus we consider $\isop$  its transitive closure to make it an equivalence relation. It is easy to show that if $f\isop g$ then $f$ and $g$ purify the same morphism of $\mathbf{C}^{\sground}$. The converse is also true:

\begin{lemma}\label{lem:unicity}
	For all $f:A\to B\otimes X$ and $g:A\to B\otimes Y$: $f\isop g \Leftrightarrow \tikzfig{lm0-s}=\tikzfig{lm1-s}$
\end{lemma}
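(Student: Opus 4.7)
The strategy is to split the two implications: the forward direction by direct rewriting in $\mathbf{C}^{\sground}$, and the (harder) converse by exhibiting a concrete model of the pushout whose morphisms are manifestly $\isop$-classes. For the forward direction I start from the base case $f \iso g$, witnessed by isometries $u : X \to Z$ and $v : Y \to Z$ with $(1_B \otimes u) \circ f = (1_B \otimes v) \circ g$. Post-composing both sides by $1_B \otimes \ground_Z$ and using the defining relations $\term{Z} \circ u = \term{X}$ and $\term{Z} \circ v = \term{Y}$ from $\cterm{\ciso{\mathbf{C}}}$ (transported to $\mathbf{C}^{\sground}$ through $\iota_{\cterm{\ciso{\mathbf{C}}}}$) yields the desired equation $(1_B \otimes \ground_X) \circ \iota_{\mathbf{C}}(f) = (1_B \otimes \ground_Y) \circ \iota_{\mathbf{C}}(g)$; transitivity of equality then extends this to the transitive closure $\isop$.

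For the converse I build an auxiliary SMC $\mathbf{D}$ with the objects of $\mathbf{C}$ and, as morphisms $A \to B$, the $\isop$-classes $[X, f]$ of pure purifications $f \in \mathbf{C}[A, B \otimes X]$. Identities are $[I, 1_A]$, composition is $[Y, g] \circ [X, f] := [Y \otimes X,\, (g \otimes 1_X) \circ f]$, and tensor is defined by tensoring the underlying purifications and rearranging the garbage outputs via symmetries. Two i.o.o.\ SMC-functors $\iota_1 : \mathbf{C} \to \mathbf{D}$, $f \mapsto [I, f]$, and $\iota_2 : \cterm{\ciso{\mathbf{C}}} \to \mathbf{D}$, with $\iota_2(\term{A}) := [A, 1_A]$ and $\iota_2(f) := [I, f]$ on isometries, agree on $\ciso{\mathbf{C}}$; by the universal property of the pushout defining $\mathbf{C}^{\sground}$ they assemble into a unique functor $\Phi : \mathbf{C}^{\sground} \to \mathbf{D}$. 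A routine calculation gives $\Phi\bigl((1_B \otimes \ground_X) \circ \iota_{\mathbf{C}}(f)\bigr) = [X, f]$, so equality of the two ground-terminated diagrams in $\mathbf{C}^{\sground}$ forces $[X, f] = [Y, g]$ in $\mathbf{D}$, i.e.\ $f \isop g$.

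The main obstacle I expect is verifying that $\isop$ is a congruence for composition and tensor in $\mathbf{D}$, so that $\mathbf{D}$ is genuinely an SMC. Tensoring and pre/post-composition with a fixed morphism of $\mathbf{C}$ are handled by tensoring the witnessing isometries $u, v$ with identities; closure of $\isop$ under the composition of two equivalence classes then reduces to this plus naturality of the symmetry used to reorder the garbage outputs. A secondary but useful check is the well-definedness of $\iota_2$: the relation $\term{B} \circ f = \term{A}$ for an isometry $f : A \to B$ must become $[B, f] = [A, 1_A]$ in $\mathbf{D}$, which holds because $f \iso 1_A$ via the isometries $1_B$ and $f$. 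Once these congruence and functoriality checks are in place, the lemma follows.
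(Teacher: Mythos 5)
Your proof is correct. The forward implication is exactly the paper's argument: compose the witnessing equation $(1_B\otimes u)\circ f=(1_B\otimes v)\circ g$ with $1_B\otimes\disc{Z}$, use the relations $\term{Z}\circ u=\term{X}$ and $\term{Z}\circ v=\term{Y}$ imported from $\cterm{\ciso{\mathbf{C}}}$, and pass to $\isop$ by transitivity of equality. For the converse you take a genuinely different route. The paper reasons syntactically on the derivation witnessing the equality in $\cdisc{\mathbf{C}}$: it normalises the rewrite chain so that every intermediate diagram is a pure part followed by discards and the only steps touching a discard are isometry deletions/creations, then checks that each single step preserves the $\iso$-class of the pure part. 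You instead build the candidate quotient $\mathbf{D}$ whose morphisms are $\isop$-classes of purifications, verify it is an SMC admitting compatible cocone functors from $\mathbf{C}$ and $\cterm{\ciso{\mathbf{C}}}$, and extract $f\isop g$ by applying the mediating functor $\Phi$ supplied by the universal property of the pushout. Your version trades the paper's somewhat informal normalisation claim (``one can always move the other morphisms to make them interact with the discards'') for explicit congruence and coherence checks on $\mathbf{D}$: stability of $\isop$ under composition and tensor with fixed morphisms (witnessed by $1\otimes u$ and $1\otimes v$), the interchange law holding because permutations of garbage wires are unitary and hence absorbed by $\iso$, and well-definedness of $\iota_2$ via $f\iso 1_A$ for an isometry $f$. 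These checks are routine but genuinely necessary, and you correctly identify them as the load-bearing part. A pleasant by-product of your construction is that, combined with Lemma \ref{lem:pur}, it directly exhibits $\cdisc{\mathbf{C}}[A,B]$ as the set of $\isop$-equivalence classes, which the paper only records afterwards as a consequence of the lemma.
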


\noindent The proof is in appendix at page \pageref{prf:unicity}.
\vspace{0.15cm}

So the purification is unique up to $\isop$. Lemma \ref{lem:unicity} also gives an alternative definition of $\cdisc{\mathbf{C}}$ which relates more easily to the $\operatorname{CPM}$ construction. It is the same construction as $\operatorname{CPM}$ with $\cp$ replaced by $\isop$. In other words $\cdisc{\mathbf{C}}[A,B]$ is the set of equivalent classes of $\isop$.

As we have introduced a new discard construction, a natural question is whether $\mathbf{C}^{\sground}$ is an environment structure for $\mathbf{C}$. 
To be an environment structure, three conditions are required. The
first two are satisfied: $\mathbf{C}^{\sground}$ has a discard
morphism for every object, and every morphism can be purified. The
third one is the uniqueness of the purification: according to the
definition of the environment structures, $f$ and $g$ purify the same
morphism if and only if $f\cp  g$ whereas according to Lemma
\ref{lem:unicity}, $f$ and $g$ purify the same morphism if and only if
$f\isop g$. As a consequence $\mathbf{C}^{\sground}$ is an environment
structure for $\mathbf{C}$ if and only if $\cp   = \isop$.
It turns out that one of the inclusions is always true:
\begin{lemma} \label{lem:inclu}
For any $\dagger$-SMC category $\mathbf{C}$, we have $\isop \subseteq \cp$.
\end{lemma}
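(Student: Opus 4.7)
The plan is to reduce to the single-step case and then perform one diagrammatic cancellation. Observe first that $\cp$ is defined by an equation between fixed diagrams in $\mathbf{C}$, hence it is automatically reflexive, symmetric and transitive. Since $\isop$ is by construction the smallest transitive relation containing $\iso$, it suffices to establish the pointwise inclusion $\iso\subseteq \cp$ and the conclusion $\isop \subseteq \cp$ then follows by transitivity of $\cp$.

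So suppose $f\iso g$, i.e.\ there exist isometries $u:X\to Z$ and $v:Y\to Z$ with $(1_B \otimes u)\circ f = (1_B \otimes v)\circ g$. I would prove the auxiliary fact that for any morphism $h:A\to B\otimes X$ and any isometry $u:X\to Z$, the CPM diagram used to define $\cp$ satisfies
\[
  \textnormal{CPM}\bigl((1_B\otimes u)\circ h\bigr) \;=\; \textnormal{CPM}(h).
\]
Once this fact is in hand, applying it on both sides of the hypothesis $(1_B\otimes u)\circ f = (1_B\otimes v)\circ g$ yields $\textnormal{CPM}(f)=\textnormal{CPM}((1_B\otimes u)\circ f)=\textnormal{CPM}((1_B\otimes v)\circ g)=\textnormal{CPM}(g)$, which is exactly $f\cp g$.

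The auxiliary fact is the core of the proof and the only place where the isometry hypothesis is used. Diagrammatically, the CPM expression for $(1_B\otimes u)\circ h$ has, on top of the CPM diagram of $h$, a copy of $u$ on one leg of the cap contracting the auxiliary wires, and the corresponding dagger-conjugate $u_{\ast}$ on the other leg. Sliding $u_{\ast}$ around the cap via the compact-closed snake identities turns $\tikzfig{cup-1}_{Z}\circ(u\otimes u_{\ast})$ into $\tikzfig{cup-1}_{X}\circ(u^{\dagger}\circ u\otimes 1_{X^{\ast}})$, and the isometry condition $u^{\dagger}\circ u = 1_X$ collapses this back to the plain cap on $X$. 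The leftover diagram is then the CPM diagram of $h$ itself.

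The main obstacle is really this last diagrammatic manipulation: making the bending of $u$ through the compact structure fully explicit, and checking that the notation $(\cdot)_{\ast}$ used for the "other side" of the CPM diagram matches the one in the definition of $\textnormal{CPM}$ given earlier in the paper. Beyond this routine bookkeeping the argument is short, and in particular does not need the converse inclusion $\cp \subseteq \isop$, which is precisely what would fail for categories without enough isometries.
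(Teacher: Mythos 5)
Your proposal is correct and follows essentially the same route as the paper's proof: reduce to $\iso\ \subseteq\ \cp$ by transitivity of $\cp$, then use the compact structure to slide the conjugated isometry around the cap so that the isometry condition $u^{\dagger}\circ u=1_X$ cancels it, which the paper carries out as a three-step chain of diagram equalities. No gaps.
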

\noindent The proof is in appendix at page \pageref{prf:inclu}.
%\end{proof}

As a consequence, if $\cp \not= \isop$, it means that there are some
morphisms $f,g$ that are equal in $\cp$ but cannot be proved equal in
$\isop$. Intuitively it means the category has not enough isometries
to prove those terms equal, which leads to the following definition:

\begin{definition}
  A $\dagger$-SMC category $\mathbf{C}$ has \emph{enough isometries} if the
  equivalences relations $\cp$ and $\isop$  of $\mathbf{C}$ are equal.
\end{definition}

\begin{lemma}\label{thm:cpenv}%\label{cor:cpCPM}
 Given a $\dagger$-SMC $\mathbf{C}$, the following properties are equivalent:
 \begin{enumerate}
 \item $\mathbf{C}$ has enough isometries;
 \item $\mathbf{C}^{\sground}$ is an environment structure for  $\mathbf{C}$;
 \item $\mathbf{C}^{\sground}\simeq \operatorname{CPM}(\mathbf{C})$.
 \end{enumerate}
 \end{lemma}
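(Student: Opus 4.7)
The plan is to establish (1) $\Leftrightarrow$ (2) by direct comparison of the uniqueness conditions, then obtain (2) $\Leftrightarrow$ (3) from the essential uniqueness of environment structures.

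For (1) $\Leftrightarrow$ (2), I would first check that conditions (1) and (2) of the environment structure definition hold for $\mathbf{C}^{\sground}$ unconditionally. Condition (1) is immediate from the construction: $\disc{A} = \iota_{\cterm{\ciso{\mathbf{C}}}}(\term{A})$, and in $\cterm{\ciso{\mathbf{C}}}$ the terminal maps satisfy $\term{I} = 1_I$ and $\term{A}\otimes\term{B} = \term{A\otimes B}$ by uniqueness of maps into the terminal object $I$; the functor $\iota_{\cterm{\ciso{\mathbf{C}}}}$ preserves these equalities. Condition (2), existence of purifications, is exactly Lemma \ref{lem:pur}. Hence the only nontrivial requirement is condition (3), the uniqueness of purification. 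By definition it reads: for all $f\in\mathbf{C}[A,B\otimes X]$ and $g\in\mathbf{C}[A,B\otimes Y]$,
\[
f\cp g \iff \iota_{\mathbf{C}}(f)\circ(1_{B}\otimes \disc{X}) = \iota_{\mathbf{C}}(g)\circ(1_{B}\otimes \disc{Y}) \text{ in } \mathbf{C}^{\sground}.
\]
By Lemma \ref{lem:unicity}, the right-hand side is equivalent to $f\isop g$. Combining this with Lemma \ref{lem:inclu} (which provides $\isop\subseteq\cp$ unconditionally), condition (3) holds if and only if $\cp = \isop$ on $\mathbf{C}$, which is precisely the definition of $\mathbf{C}$ having enough isometries.

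For (2) $\Rightarrow$ (3), I would invoke the fact recalled in the excerpt that any environment structure for $\mathbf{C}$ is equivalent to $\operatorname{CPM}(\mathbf{C})$ \cite{coecke2016pictures}: applying this to $\mathbf{C}^{\sground}$ yields $\mathbf{C}^{\sground}\simeq\operatorname{CPM}(\mathbf{C})$. The equivalence sends the class of a purification $f':A\to B\otimes X$ (seen as a morphism of $\mathbf{C}^{\sground}$ after discarding $X$) to $\tikzfig{BCC-cpm-tensor}\in\operatorname{CPM}(\mathbf{C})[A,B]$; this is well-defined precisely because, under the hypothesis, two purifications give the same morphism in $\mathbf{C}^{\sground}$ iff they are $\cp$-related, iff they represent the same element of $\operatorname{CPM}(\mathbf{C})$. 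For (3) $\Rightarrow$ (2), since $\operatorname{CPM}(\mathbf{C})$ is itself an environment structure for $\mathbf{C}$ \cite{coecke2016pictures}, any SMC equivalent to it, with discard maps matching under the equivalence, is also an environment structure; the embedding $\iota_{\mathbf{C}}:\mathbf{C}\to\mathbf{C}^{\sground}$ is the i.o.o.\ SMC-functor required by the definition.

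The main obstacle I expect is bookkeeping around (2) $\Leftrightarrow$ (3): one has to ensure that the equivalence $\mathbf{C}^{\sground}\simeq\operatorname{CPM}(\mathbf{C})$ respects the embedding of $\mathbf{C}$ and sends $\disc{A}$ to the canonical discard of $\operatorname{CPM}(\mathbf{C})$, so that the notion of environment structure genuinely transports across. This is precisely why it is natural to route the argument through the essential-uniqueness theorem of \cite{coecke2016pictures} rather than building a concrete equivalence by hand, since that theorem packages exactly this coherence. The remainder of the proof is conceptually straightforward once Lemmas \ref{lem:pur}, \ref{lem:unicity} and \ref{lem:inclu} are in place.
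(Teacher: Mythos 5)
Your proposal is correct and follows essentially the same route as the paper: conditions (1) and (2) of the environment-structure definition are verified unconditionally (monoidality of the embedding and Lemma \ref{lem:pur}), condition (3) is reduced to $\cp = \isop$ via Lemmas \ref{lem:unicity} and \ref{lem:inclu}, and the equivalence with $\operatorname{CPM}(\mathbf{C})$ is obtained by citing the characterisation of environment structures from Coecke--Heunen. The only nitpick is a composition-order slip in your displayed formula for condition (3), where the discard should be post-composed with $\iota_{\mathbf{C}}(f)$ rather than pre-composed; this does not affect the argument.
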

%\begin{proof}
%	In appendix at page \pageref{prf:cpenv}.
%\end{proof}
%%

\noindent The proof is in appendix at page \pageref{prf:cpenv}.
%\end{proof}
\vspace{0.15cm}

%
%So, equivalence classes of $\cp $ can be partitioned in several equivalence classes for $\isop$. This allows to define a full comparison functor $K:\mathbf{C}^{\sground}\to \operatorname{CPM}(\mathbf{C})$ which is i.o.o and sends any $\isop$ equivalence class to the corresponding $\cp$ equivalence class.
%
%As a consequence, a necessary and sufficient condition on $\mathbf{C}$, for $\mathbf{C}^{\sground}$ to be an environment structure is that $\cp~\subseteq~\isop$. We call it the \emph{CP-condition}.
%
%\begin{lemma}\label{thm:cpenv} Given a $\dagger$-SMC $\mathbf{C}$,
%	$\mathbf{C}^{\sground}$ is an environment structure for  $\mathbf{C}$  if and only if $\cp$ $\subseteq$ $\isop$.
%\end{lemma}
%
%\begin{proof}
%	In appendix at page \pageref{prf:cpenv}.
%\end{proof}
%
%The CP-condition also implies the equivalence of the discard and CPMco nstructions:
%
%\begin{corollary}\label{cor:cpCPM}
%	$\mathbf{C}^{\sground}\simeq \operatorname{CPM}(\mathbf{C})$ iff $\cp  $ $\subseteq$ $\isop$.
%\end{corollary}
%
%\begin{proof}
%	Then the comparison functor $K$ is faithful and so it is an equivalence of category.
%\end{proof}

Notice that if $\textbf C$ has enough isometries, the discard construction provides a definition of $\operatorname{CPM}(\textbf C)$ via a universal property. This gives a more direct way to built the environment, avoiding to deal with the equivalence classes of the $\operatorname{CPM}$ construction. 

\begin{remark}
 Let's focus for a moment on the category $\operatorname{Causal}\operatorname{CPM}(\mathbf{C})$ of causal maps, that is the subcategory of maps cancelled by the discards in $\operatorname{CPM}(\mathbf{C})$. We have that:
 $\cp  \subseteq\isop \Rightarrow \cterm{\ciso{\mathbf{C}}} \simeq \operatorname{Causal}\operatorname{CPM}(\mathbf{C})$.
	In fact by Lemma \ref{thm:cpenv}, $\operatorname{CPM}(\mathbf{C})\simeq \mathbf{C}^{\sground}$, and then the subcategory $\operatorname{Causal}\operatorname{CPM}(\mathbf{C})$ is equivalent to the subcategory of maps cancelled by the discards in $\mathbf{C}^{\sground}$ which is equivalent to $\cterm{\ciso{\mathbf{C}}}$.
	$\operatorname{Causal}\operatorname{CPM}(\mathbf{FHilb})$ being exactly $\mathbf{CPTPM}$, we have recovered the result of \cite{huotuniversal}.
\end{remark}

\subsection{Examples}% for the CP-condition}

%In the following we consider we apply the discard construction to several The discard construction 
We consider the usual subcategories of $\mathbf{FHilb}$ used for pure quantum mechnanics and show in each case whether the discard construction produces an environment structure or not. 
First of all, thanks to the Stinespring dilation theorem, $\mathbf{FHilb}^{\sground}$ is not only an environment structure for $\mathbf{FHilb}$, but the relation $\iso$ is also transitive in this case:

%the construction succeeds for $\mathbf{FHilb}$:% guarantees that $\mathbf{FHilb}^{\sground}$ is an environment structure for $\mathbf{FHilb}$. 

%The motivating example of a category satisfying the CP-condition is $\mathbf{FHilb}$. It follows from the Stinespring dilation theorem. In fact it satisfies something even stronger.

\begin{proposition}\label{lem:CPFH}
$\mathbf{FHilb}^{\sground}$ is an environment structure for
$\mathbf{FHilb}$. Furthermore $\isop = \iso$.
\end{proposition}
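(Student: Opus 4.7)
The plan is to leverage Stinespring's dilation theorem together with the reductions already established in the preceding lemmas. By Lemma \ref{thm:cpenv}, showing that $\mathbf{FHilb}^{\sground}$ is an environment structure for $\mathbf{FHilb}$ is equivalent to showing that $\mathbf{FHilb}$ has enough isometries, i.e.~$\cp = \isop$. Lemma \ref{lem:inclu} already yields $\isop \subseteq \cp$, and $\iso \subseteq \isop$ holds by definition of the transitive closure. Both assertions of the proposition will therefore follow at once from the single sharper inclusion $\cp \subseteq \iso$, since chaining then gives $\cp = \iso = \isop$ and, in particular, $\iso$ coincides with its own transitive closure.

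To prove $\cp \subseteq \iso$ in $\mathbf{FHilb}$, I would take $f \in \mathbf{FHilb}[A, B \otimes X]$ and $g \in \mathbf{FHilb}[A, B \otimes Y]$ with $f \cp g$ and first translate the graphical condition into an analytic one. Using the dagger-compact structure of $\mathbf{FHilb}$, the equation defining $\cp$ unfolds precisely to the equality of completely positive maps $\rho \mapsto \tr_X(f \rho f^\dagger) = \tr_Y(g \rho g^\dagger)$ from $\mathcal{L}(A)$ to $\mathcal{L}(B)$. Hence $f$ and $g$ are two Stinespring dilations of the same CP map $\Phi$. The uniqueness part of Stinespring's theorem then furnishes a minimal dilation $h \in \mathbf{FHilb}[A, B \otimes M]$ together with isometries $w_X : M \to X$ and $w_Y : M \to Y$ satisfying $f = (1_B \otimes w_X) \circ h$ and $g = (1_B \otimes w_Y) \circ h$.

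The last step is to combine $w_X$ and $w_Y$ into a common extension. Taking $Z = M \oplus (X \ominus w_X(M)) \oplus (Y \ominus w_Y(M))$, the natural inclusion $u : X \to Z$ into the first two summands and the natural inclusion $v : Y \to Z$ into the first and third summands are isometries, and by construction $u \circ w_X = v \circ w_Y$. Consequently $(1_B \otimes u) \circ f = (1_B \otimes v) \circ g$, which is exactly $f \iso g$. The main subtlety will be the careful translation from the diagrammatic condition $\cp$ to the analytic equality of CP maps, and ensuring that the classical Stinespring argument is applied in a way that produces both isometries symmetrically rather than a single partial isometry between $X$ and $Y$; once that dictionary is in place, the linear-algebraic assembly of $Z$ is routine, and notably the argument avoids any recourse to the transitive closure, which is what forces $\iso$ to be transitive in $\mathbf{FHilb}$.
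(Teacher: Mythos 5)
Your proposal is correct and follows essentially the same route as the paper: both reduce everything to the single inclusion $\cp \subseteq \iso$ in $\mathbf{FHilb}$ and obtain it from the Stinespring dilation theorem, concluding $\isop=\iso$ and the environment-structure property simultaneously. The only difference is that you spell out the uniqueness-of-minimal-dilation step and the explicit direct-sum construction of the common codomain $Z$, which the paper leaves to the citation of Stinespring's theorem.
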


\noindent The proof is in appendix at page \pageref{prf:CPFH}.
%\end{proof}
\vspace{0.15cm}

When dealing with graphical languages we will be more interested in the full subcategory $\mathbf{Qubit}$ of $\mathbf{FHilb}$:

\begin{proposition}\label{lem:CPQ}
	$\mathbf{Qubit}^{\sground}$ is an environment structure for
$\mathbf{Qbit}$.
\end{proposition}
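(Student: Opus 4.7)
The plan is to invoke Lemma~\ref{thm:cpenv} and reduce the problem to showing that $\mathbf{Qubit}$ has enough isometries, i.e.~that $\cp \subseteq \isop$ in $\mathbf{Qubit}$ (the other inclusion holding by Lemma~\ref{lem:inclu}). Since $\mathbf{Qubit}$ is a \emph{full} subcategory of $\mathbf{FHilb}$ with the same $\dagger$-CC structure (tensor, cups, caps and dagger all inherited), the predicate ``$f \cp g$'' on morphisms of $\mathbf{Qubit}$ is the same whether computed in $\mathbf{Qubit}$ or in $\mathbf{FHilb}$; the only subtlety for $\iso$ is that the mediating object is allowed to range over a larger class in $\mathbf{FHilb}$.

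Concretely, take $f \in \mathbf{Qubit}[A, B \otimes X]$ and $g \in \mathbf{Qubit}[A, B \otimes Y]$ with $f \cp g$. Viewing $f$ and $g$ in $\mathbf{FHilb}$ and applying Proposition~\ref{lem:CPFH}, there exist isometries $u \colon X \to Z$ and $v \colon Y \to Z$ in $\mathbf{FHilb}$ such that $(1_B \otimes u) \circ f = (1_B \otimes v) \circ g$. Here $Z$ need not be of the form $\mathbb C^{2^n}$, so this equation does not yet witness $f \iso g$ in $\mathbf{Qubit}$.

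To repair this I would ``pad'' $Z$ to the next power of two: choose $N$ with $2^N \geq \dim Z$ and let $i \colon Z \hookrightarrow \mathbb C^{2^N}$ be the canonical inclusion, which is an isometry in $\mathbf{FHilb}$. Then $i \circ u \colon X \to \mathbb C^{2^N}$ and $i \circ v \colon Y \to \mathbb C^{2^N}$ are isometries between objects of $\mathbf{Qubit}$, hence are morphisms of $\mathbf{Qubit}$ by fullness, and they are still isometries there since $\dagger$ and composition agree with $\mathbf{FHilb}$. Composing the previous equation with $1_B \otimes i$ on the left yields
\[
(1_B \otimes (i \circ u)) \circ f \;=\; (1_B \otimes (i \circ v)) \circ g,
\]
which is precisely the condition $f \iso g$ in $\mathbf{Qubit}$, so $f \isop g$. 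Hence $\cp \subseteq \isop$ in $\mathbf{Qubit}$, the category has enough isometries, and Lemma~\ref{thm:cpenv} gives the result.

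The only step that requires care is the padding argument: one must check that the inclusion $i$ genuinely lives in $\mathbf{FHilb}$ as an isometry and that post-composing with $1_B \otimes i$ preserves the witnessing equation, but both are immediate from linearity and functoriality of $\otimes$. Note also that the same argument works verbatim for any full $\dagger$-CC subcategory of $\mathbf{FHilb}$ whose collection of objects is cofinal in dimension (i.e.~for every finite-dimensional Hilbert space there is a larger object in the subcategory), which is exactly what justifies specializing the $\mathbf{FHilb}$ result to $\mathbf{Qubit}$.
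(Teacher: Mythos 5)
Your proposal is correct and follows essentially the same route as the paper: both reduce to the $\mathbf{FHilb}$ case via Proposition~\ref{lem:CPFH} and then observe that the mediating isometries can be post-composed with an isometric inclusion into a space of dimension $2^N$ so that the witnesses live in $\mathbf{Qubit}$. Your write-up just makes explicit the role of fullness and the preservation of the witnessing equation, which the paper leaves implicit.
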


\noindent The proof is in appendix at page \pageref{prf:CPQ}.
%\end{proof}
\vspace{0.15cm}
 
%\begin{proof}
%  In appendix at page \pageref{prf:CPQ}.
%\end{proof}
Notice that in general, the property of having enough isometries does not transfer to full subcategories: % is not enough to
%transfer the condition $\cp$ $\subseteq$ $\isop$:
 If $\mathbf{D}$ is a full subcategory of
$\mathbf{C}$, we might have $f \isop g$ on $\mathbf{C}$ but
$f \not\isop g$ on $\mathbf{D}$. This could happen for two reasons:
First the chain of intermediate morphisms that prove that $f \isop g$
might live outside of $\mathbf{D}$. Second, the isometries that ``prove''
that $f \isop g$ on $\mathbf{C}$ might have codomain outside of $\mathbf{D}$.

If our category is not a full subcategory, then all hell breaks loose, and finding
conditions that guarantees that $\mathbf{C}^{\sground}$  is an environment
structure for $\mathbf{C}$ is not easy.

For subcategories of $\mathbf{Qubit}$, necessary conditions can be given. This
category has the peculiarity that $\cdot^*$ is the identity on object and
that $f^{**} = f$ for all morphisms ($\cdot^*$ maps a matrix to its
conjugate matrix). In particular, for any
state $\phi: I \rightarrow I \otimes X$, we have $\phi^* \cp
\phi$. Indeed
\tikzfig{phi-star-phi}.

So a necessary condition for a subcategory of $\mathbf{Qubit}$ to
behave nicely is that for all states $\phi$, we have $\phi^* \isop \phi$. 
This is the case in $\mathbf{Stab}$: Given a stabilizer state $\phi$, there always exists a
stabilizable unitary $U$ s.t. $U\phi = \phi^*$. In fact:

\begin{proposition}\label{lem:CPS}
	${\mathbf{Stab}}^{\sground}$ is an environment structure for $\mathbf{Stab}$.
\end{proposition}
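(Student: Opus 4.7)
Proof plan. By Lemma~\ref{thm:cpenv}, it suffices to show that $\mathbf{Stab}$ has enough isometries, i.e.\ that $\cp\,=\,\isop$; and by Lemma~\ref{lem:inclu} only the inclusion $\cp\,\subseteq\,\isop$ requires work. So fix stabilizer morphisms $f:A\to B\otimes X$ and $g:A\to B\otimes Y$ with $f\cp g$, and aim to build a chain of stabilizer isometries witnessing $f\isop g$.

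The first step is to exploit the fact that $\mathbf{Stab}$ is a full subcategory of $\mathbf{Qubit}$: the relation $f\cp g$ transports to $\mathbf{Qubit}$, and Proposition~\ref{lem:CPQ} together with Lemma~\ref{thm:cpenv} gives $f\isop g$ there. This yields a finite chain $f=h_0,h_1,\dots,h_n=g$ with each consecutive pair related by Hilbert-space isometries $u_i,v_i$ as in the defining diagram of $\iso$. The remaining task is to refine this chain so that every intermediate morphism $h_i$ lies in $\mathbf{Stab}$ and every connecting isometry is Clifford.

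The crucial lever is the atomic statement highlighted in the paragraph just before the proposition: for every stabilizer pure state $\phi$ there is a Clifford unitary $U$ with $U\phi=\phi^*$. This neutralises the canonical obstruction $\phi^*\cp\phi$ identified in that same paragraph. To promote this atomic fact to the full statement, I would appeal to the classification of stabilizer states: two stabilizer purifications on the same system with equal reduced density matrices on the complement are always intertwined by a Clifford unitary on the environment, possibly after adjoining stabilizer ancillas. With this rigidity in hand, each Hilbert-space isometry $u_i,v_i$ in the $\mathbf{Qubit}$ chain can be replaced by a stabilizer one and each $h_i$ chosen in $\mathbf{Stab}$, giving the desired witness for $f\isop g$ in $\mathbf{Stab}$.

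The main obstacle is precisely this last classification step: lifting Hilbert-space isometries between stabilizer purifications to Clifford isometries. This is where the parallel argument is expected to fail for $\mathbf{Clifford{+}T}$, since that category can only approximate, but not exactly realise, arbitrary such intertwiners. In $\mathbf{Stab}$ one can rely on the finiteness of the stabilizer state set up to global phase and on the transitive action of the Clifford group on pure stabilizer states of a fixed system, which together provide the required exact lift.
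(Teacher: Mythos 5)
Your high-level strategy matches the paper's: reduce to showing $\cp\subseteq\isop$, import the Hilbert-space result (Stinespring, via Proposition~\ref{lem:CPFH}/\ref{lem:CPQ}) to get isometric intertwiners in $\mathbf{FHilb}$, and then argue that these intertwiners can be taken to be stabilizer maps. However, the step on which everything hinges is exactly the one you leave as an appeal to a ``classification of stabilizer states'': the claim that two stabilizer purifications of the same mixed state are intertwined by a Clifford isometry \emph{on the environment alone}. This claim is essentially equivalent to the proposition itself, and the justification you offer for it does not establish it. The transitive action of the Clifford group on pure stabilizer states of a fixed system gives you a global Clifford unitary mapping one state to another; it says nothing about the existence of a unitary supported only on the purifying register $X$ (or $Y$) that fixes the $B$ part. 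Finiteness of the stabilizer state set likewise does not produce such a local intertwiner. So as written the argument is circular at its core.

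The paper closes this gap with two ingredients you do not have. First, it reduces to states by map/state duality and normalisation, and then invokes the Audenaert--Plenio normal form for bipartite stabilizer states \cite{clifford-state-partition}: each state $d_i$ factors as local Clifford unitaries applied to a canonical form, from which a \emph{minimal} purification $A_i'$ is extracted with $d_i\iso A_i'$ (one only deletes isometries to pass between them). Second, and crucially, once both sides are in minimal form, the Hilbert-space intertwiner $w$ obtained from Stinespring (after completing $u$ to a unitary and composing with its adjoint) is shown to be \emph{itself expressible as a composite of the stabilizer maps $A_1'$ and $A_2'$} using compact closure; hence $w$ lies in $\mathbf{Stab}$ without any appeal to a rigidity theorem. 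This is the step your proposal is missing: rather than classifying all intertwiners and hoping one is Clifford, the paper exhibits the needed intertwiner as a stabilizer composite directly. Your closing remark about why the argument fails for $\mathbf{Clifford{+}T}$ is on target (cf.\ Proposition~\ref{lem:CPC}), but for $\mathbf{Stab}$ you still owe the actual construction.
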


\noindent The proof is in appendix at page \pageref{prf:CPS}.
%\end{proof}
\vspace{0.15cm}

      The main idea of the proof is to use the map/state duality, and structural results about
      bipartite stabilizer states \cite{clifford-state-partition}.

No such unitary exist in general in $\mathbf{Clifford{+}T}$: For
almost all states $\phi$, there are no unitary $U$ (and even no
morphism at all) s.t. $U \phi = \phi^*$.  $\mathbf{Clifford{+}T}$
therefore has not enough isometries:

\begin{proposition}\label{lem:CPC}
	$({\mathbf{Clifford{+}T}})^{\sground}$ is not an environment structure
        for $\mathbf{Clifford{+}T}$. More precisely, there exists a state $\phi$
        s.t. $\phi \cp \phi^*$ but $\phi \not\isop \phi^*$. One can
        take for example $\phi = 1 + 2i$ (in this case $\phi$ is a state
        with no input and outputs, hence a scalar).
\end{proposition}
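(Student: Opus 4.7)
My approach is to verify $\phi \cp \phi^*$ directly and then exhibit an invariant of $\isop$ that tells the two scalars apart. The equivalence $\phi \cp \phi^*$ is a quick check: for scalars $I \to I$, the $\operatorname{CPM}$ equivalence $f \cp g$ collapses to $f\bar{f} = g\bar{g}$, and indeed $|1+2i|^2 = 5 = |1-2i|^2$. So the content of the proof lies in the negative direction.

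For $\phi \not\isop \phi^*$ I would introduce the \emph{content ideal} of a state. Set $R = \mathbb{Z}[i, 1/\sqrt{2}]$; every morphism of $\mathbf{Clifford{+}T}$ has entries in $R$. For a state $h : I \to X$, viewed as a vector with entries in $R$, define $\operatorname{cont}(h) \subseteq R$ to be the $R$-ideal generated by the components of $h$. The key lemma is that $\operatorname{cont}(wh) = \operatorname{cont}(h)$ whenever $w$ is an isometry of $\mathbf{Clifford{+}T}$: the inclusion $\subseteq$ is immediate since each entry of $wh$ is an $R$-linear combination of entries of $h$; the reverse inclusion follows from $h = w^\dagger(wh)$, together with the observation that complex conjugation is a ring automorphism of $R$, so $w^\dagger$ also has entries in $R$. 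Applying this lemma to both sides of the equation $(1_I \otimes u) \circ h_i = (1_I \otimes v) \circ h_{i+1}$ witnessing $h_i \iso h_{i+1}$ gives $\operatorname{cont}(h_i) = \operatorname{cont}(h_{i+1})$, so the content ideal is invariant along every $\iso$-step, and hence along any $\isop$-chain.

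If $\phi \isop \phi^*$ were to hold, this invariance would force $(1+2i)R = (1-2i)R$, equivalently $(1+2i)/(1-2i) = (-3+4i)/5 \in R^{\times}$. Ruling this out is where the real work lies, and it is the main obstacle. The cleanest argument proceeds through algebraic number theory: $R$ is the localization of the Dedekind domain $\mathbb{Z}[\omega_8]$ obtained by inverting $2$, and since $5$ is coprime to $2$, divisibility by $5$ in $R$ coincides with divisibility in $\mathbb{Z}[\omega_8]$; using $\mathbb{Z}[\omega_8] \cap \mathbb{Q}(i) = \mathbb{Z}[i]$, this further reduces to $5 \mid -3+4i$ in $\mathbb{Z}[i]$, which is patently false. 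Hence the content ideals $(1+2i)R$ and $(1-2i)R$ are distinct, the invariant separates $\phi$ from $\phi^*$, and $\phi \not\isop \phi^*$, completing the proof. An equivalent conceptual rephrasing is that the conjugate primes $(2-i)$ and $(2+i)$ of $\mathbb{Z}[i]$ lying above $5$ stay distinct in $\mathbb{Z}[\omega_8]$ and remain so after inverting $2$, so the map/state $\phi = 1+2i = i(2-i)$ cannot be transported by Clifford+T isometries onto $\phi^* = 1-2i = -i(2+i)$.
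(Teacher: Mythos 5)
Your proof is correct, and at its core it runs on the same engine as the paper's: both arguments exploit the fact that an isometry $w$ of $\mathbf{Clifford{+}T}$ is undone by $w^{\dagger}$, which again has entries in $R=\mathbb{Z}[i,\tfrac{1}{\sqrt 2}]$, so that an $\isop$-chain between the two scalars forces a divisibility relation in $R$, the final obstruction being in both cases that $(1+2i)/(1-2i)=(-3+4i)/5\notin R$. The packaging differs: the paper first proves that $f\isop g$ implies $f=(1\otimes w)\circ g$ for some morphism $w$ of the category (apply $1\otimes u^{\dagger}$ to each $\iso$-step and compose along the chain) and then specializes to scalars, whereas you set up a content-ideal invariant of states; for $\phi=1+2i$ the two formulations coincide, and your invariant is essentially the paper's witness lemma read ideal-theoretically (slightly less general, since it is stated only for states, but that is all the proposition needs). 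Where you add genuine value is the last step: the paper simply asserts that $\frac{4i-3}{5}\notin\mathbb{Z}[i,\tfrac{1}{\sqrt 2}]$, while you actually prove it, reducing divisibility by $5$ in $\mathbb{Z}[\omega_8][\tfrac{1}{2}]$ to divisibility in $\mathbb{Z}[\omega_8]$ (since $5$ is coprime to $2$) and then to $\mathbb{Z}[i]$ via $\mathbb{Z}[\omega_8]\cap\mathbb{Q}(i)=\mathbb{Z}[i]$, where it fails by a norm computation; that verification is worth having.
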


\noindent The proof is in appendix at page \pageref{prf:CPC}.
%\end{proof}
%\vspace{0.15cm}

%\begin{proof}
%	In appendix at page \pageref{prf:CPC}.
%\end{proof}

\section{Application to the ZX-Calculus and other graphical languages} \label{sec:graphical}

%\NOTEs{Expliquer l'idee generale avant}
We now focus on the behavior of interpretation functors with respect to the discard construction. The discard construction defines a functor $(\_)^{\sground}:\dagger{-}\mathbf{SMC}\to \mathbf{SMC}$. Indeed, given a $\dagger$-SMC functor $F$, $\ciso{F}$ and $\cterm{\ciso{F}}$ uniquely define a functor $\cdisc{F}$ by pushout.
\begin{center}
	\begin{tikzpicture}[scale=1.4]
	\node (Diso) at (1,0.75) {$\ciso{\mathbf{D}}$};
	\node (Disog) at (1,-0.75) {$\cterm{\ciso{\mathbf{D}}}$};
	\node (D) at (3,0.75) {$\mathbf{D}$};
	\node (Dg) at (3,-0.75) {$\mathbf{D^{\sground}}$};
	\path[right hook->] (Diso) edge (D);
	\path[->] (Diso) edge (Disog);
	\path[right hook->] (Disog) edge (Dg);
	\path[->] (D) edge (Dg);
	\node[above left=-0.6em and -0.6em of Dg] {\rightangle};
	
	\node[circle,draw=white, fill=white, inner sep=0pt,minimum size=2pt] at (1,0) {};
	\node[circle,draw=white, fill=white, inner sep=0pt,minimum size=2pt] at (2,-0.75) {};
	
	\node (Ciso) at (0,0) {$\ciso{\mathbf{C}}$};
	\node (Cisog) at (0,-1.5) {$\cterm{\ciso{\mathbf{C}}}$};
	\node (C) at (2,0) {$\mathbf{C}$};
	\node (Cg) at (2,-1.5) {$\cdisc{\mathbf{C}}$};
	\path[right hook->] (Ciso) edge (C);
	\path[->] (Ciso) edge (Cisog);
	\path[right hook->] (Cisog) edge (Cg);
	\path[->] (C) edge (Cg);
	\node[above left=-0.6em and -0.6em of Cg] {\rightangle};
	
	\draw[->] (Ciso) -- (Diso) node[midway, sloped, above] {$F_{\textnormal{iso}}$};
	\draw[dashed, ->] (Cisog) edge node[midway,sloped,above] {$F_{\textnormal{iso}}^{!}$} (Disog) ;
	\draw[->] (C) -- (D) node[midway, above,sloped] {$F$};
	\draw[dashed, ->] (Cg) edge node[below,sloped]  {$\cdisc{F}$}  (Dg);
	\end{tikzpicture}
\end{center}
%\end{document}
The following lemma and theorem are the main tools to apply the discard construction to graphical languages:

\begin{lemma}\label{lem:Fis}
	If $F$ is faithful and if $\ciso{F} : \ciso{\mathbf{C}}\to
        \ciso{\mathbf{D}}$ is surjective, then $F(f)\isop F(g) \Rightarrow f\isop g$.
\end{lemma}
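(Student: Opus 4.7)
The plan is to lift an $\iso$-chain in $\mathbf{D}$ back to one in $\mathbf{C}$, step by step. First I would handle the single-step case $F(f) \iso F(g) \Rightarrow f \iso g$: given witness isometries $u', v'$ in $\mathbf{D}$, lift them to isometries $u, v$ in $\mathbf{C}$ using surjectivity of $\ciso{F}$, so that $F(u) = u'$ and $F(v) = v'$; faithfulness of $F$ then turns the witnessing equation $(1_B \otimes u') \circ F(f) = (1_B \otimes v') \circ F(g)$ into $(1_B \otimes u) \circ f = (1_B \otimes v) \circ g$, giving $f \iso g$.

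The main tool for the general case will be a \emph{lifting lemma}: whenever $F(f) \iso h$ with $h \in \mathbf{D}[A, B \otimes Y]$, there exists $\tilde h \in \mathbf{C}[A, B \otimes Y]$ satisfying $F(\tilde h) = h$ and $f \iso \tilde h$. To construct $\tilde h$, I would lift the witnesses $u' : X \to Z$ and $v' : Y \to Z$ to $u, v$ in $\mathbf{C}$ via $\ciso{F}$ and set $\tilde h := (1_B \otimes v^\dagger u) \circ f$. The verification $F(\tilde h) = h$ applies $(1_B \otimes v'^\dagger)$ to the witnessing equation and uses $v'^\dagger v' = 1_Y$, together with the fact that $F$ preserves the dagger. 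The relation $f \iso \tilde h$, witnessed by the same $u, v$, reduces to proving $(1_B \otimes u) \circ f = (1_B \otimes v v^\dagger u) \circ f$, which holds after applying $F$ (rewriting via the original $\iso$-equation) and then descends to $\mathbf{C}$ by faithfulness of $F$.

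The theorem will then follow by induction on the length $n$ of a chain $F(f) = h_0 \iso h_1 \iso \cdots \iso h_n = F(g)$: the lifting lemma applied to $F(f) \iso h_1$ produces $\tilde h_1 \in \mathbf{C}$ with $F(\tilde h_1) = h_1$ and $f \iso \tilde h_1$, leaving the shorter chain $F(\tilde h_1) \iso h_2 \iso \cdots \iso F(g)$ to which the induction hypothesis applies, yielding $\tilde h_1 \isop g$ and hence $f \isop g$ (the base case being covered by faithfulness of $F$). The main obstacle is the lifting lemma itself: intermediate morphisms of a chain need not lie in the image of $F$, and surjectivity is assumed only on isometries; the trick is that the isometric witnesses, combined with $F$ preserving the dagger, carry just enough structure to build a canonical preimage of $h$ out of $f$ and the lifted isometries.
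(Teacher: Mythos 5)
Your proof is correct and follows essentially the same route as the paper's: both lift the witnessing isometries through the surjection $\ciso{F}$, build a preimage of each intermediate morphism of the chain as $(1_B\otimes v^\dagger u)\circ f$, and descend the witnessing equations to $\mathbf{C}$ by faithfulness of $F$. Your write-up merely makes explicit the induction on chain length (and the auxiliary fact $f\iso\tilde h$) that the paper's argument leaves implicit.
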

\noindent The proof is in appendix at page \pageref{prf:Fis}.
%\end{proof}

%\NOTEs{Est-ce qu'il faut remplacer les deux dernieres occurence de "full" par "surjective"?}
\begin{theorem}
	\label{thm:ground}
	Let $\mathbf{C}$ and $\mathbf{D}$ be two $\dagger$-SMCs and
        $F:\mathbf{C}\to\mathbf{D}$ a $\dagger$-SMC-functor. If $F$ is
        faithful and if $\ciso{F} : \ciso{\mathbf{C}} \to
        \ciso{\mathbf{D}}$ is surjective, then $\cdisc{F}:\cdisc{\mathbf{C}} \to \cdisc{\mathbf{D}}$ is faithful. If furthermore $F$ is surjective then $\cdisc{F}$ is surjective and faithful.
\end{theorem}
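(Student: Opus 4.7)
The plan is to leverage the alternative description of $\cdisc{\mathbf{C}}$ provided by Lemma~\ref{lem:pur} and Lemma~\ref{lem:unicity}: namely that every morphism of $\cdisc{\mathbf{C}}[A,B]$ is of the form $(\disc{X}\otimes 1_B)\circ f$ for some purification $f\in\mathbf{C}[A,B\otimes X]$, and two such purifications $f,g$ represent the same morphism of $\cdisc{\mathbf{C}}$ if and only if $f\isop g$. With this characterisation, both claims reduce to transferring statements between $\mathbf{C}$ and $\mathbf{D}$ along $F$, using Lemma~\ref{lem:Fis} as the essential ingredient.

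For faithfulness, I would take two morphisms of $\cdisc{\mathbf{C}}[A,B]$ represented (via Lemma~\ref{lem:pur}) by purifications $f\in\mathbf{C}[A,B\otimes X]$ and $g\in\mathbf{C}[A,B\otimes Y]$, and assume $\cdisc{F}(f)=\cdisc{F}(g)$ in $\cdisc{\mathbf{D}}$. Since $\cdisc{F}$ is defined on the $\mathbf{C}$-part by $F$ and sends each discard to a discard, the left and right hand sides are exactly the morphisms of $\cdisc{\mathbf{D}}$ obtained by capping $F(f)$ and $F(g)$ with $\disc{F(X)}$ and $\disc{F(Y)}$ respectively. By Lemma~\ref{lem:unicity} applied in $\mathbf{D}$, this forces $F(f)\isop F(g)$. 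Lemma~\ref{lem:Fis}, whose hypotheses are precisely those assumed here, then gives $f\isop g$ in $\mathbf{C}$, and one more application of Lemma~\ref{lem:unicity} (this time in $\mathbf{C}$) shows that $f$ and $g$ represent the same morphism of $\cdisc{\mathbf{C}}$. Thus $\cdisc{F}$ is faithful.

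For surjectivity, I assume additionally that $F$ itself is surjective on morphisms, which entails surjectivity on objects (the domain and codomain of an image morphism are images). Given any $h\in\cdisc{\mathbf{D}}[A',B']$, choose by Lemma~\ref{lem:pur} a purification $h=(\disc{X'}\otimes 1_{B'})\circ g$ with $g\in\mathbf{D}[A',B'\otimes X']$. Surjectivity of $F$ on objects yields $A,B,X\in\mathbf{C}$ with $F(A)=A'$, $F(B)=B'$, $F(X)=X'$, and surjectivity on morphisms then yields $f\in\mathbf{C}[A,B\otimes X]$ with $F(f)=g$. The morphism $(\disc{X}\otimes 1_B)\circ f$ of $\cdisc{\mathbf{C}}$ is a preimage of $h$ under $\cdisc{F}$, establishing surjectivity; faithfulness is inherited from the first part.

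The main obstacle is already encapsulated in Lemma~\ref{lem:Fis}: deducing $f\isop g$ in $\mathbf{C}$ from $F(f)\isop F(g)$ in $\mathbf{D}$ requires lifting both the chain of intermediate morphisms witnessing $\isop$ and the isometries used at each step, which is exactly why one needs $F$ faithful together with $\ciso{F}$ surjective. Once this lemma is in hand, the remainder of the theorem is a bookkeeping exercise about purifications and the behaviour of $\cdisc{F}$ on them.
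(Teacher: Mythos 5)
Your proof follows essentially the same route as the paper's: purify via Lemma~\ref{lem:pur}, push the purifications through $\cdisc{F}$ (the paper phrases this as the commutativity of the cube defining $\cdisc{F}$), apply Lemma~\ref{lem:unicity} in $\mathbf{D}$, then Lemma~\ref{lem:Fis}, then Lemma~\ref{lem:unicity} in $\mathbf{C}$. Your explicit treatment of surjectivity (lifting objects and a purification along $F$) is a correct filling-in of a step the paper leaves implicit.
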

\noindent The proof is in appendix at page \pageref{prf:ground}.
%\end{proof}
\vspace{0.15cm}

Notice that the hypothesis on $\ciso{F}$ is very strong, as it makes
it an isomorphism: We want it to be surjective as we do not
want to lose even one isometry.  In particular we do not know if the
theorem still applies if $F$ is merely an equivalence of category.

%\begin{proof}
%	In appendix at page \pageref{prf:ground}.
%\end{proof}

Reformulating for graphical languages this gives:

\begin{corollary}[of Theorem \ref{thm:ground}] \label{cor:ground} Given a $\dagger$-CC $\mathbf{C}$ with enough isometries, % such that $\cdisc{\mathbf{C}}$ is an environment structure for $\mathbf{C}$. 
	if $\mathcal{G}$ is a $\dagger$-CC universal complete graphical language for $\mathbf{C}$ then $\cdisc{\mathcal{G}}$ is a universal complete language for $\operatorname{CPM(\mathbf{C})}$.
\end{corollary}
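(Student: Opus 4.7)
The plan is to simply verify the hypotheses of Theorem~\ref{thm:ground} for the interpretation functor and then combine it with Lemma~\ref{thm:cpenv}. Unpacking the definitions, the statement that $\mathcal{G}$ is a universal complete graphical language for $\mathbf{C}$ means exactly that the interpretation $\interp{\cdot}:\mathcal{G}\to\mathbf{C}$ is a faithful and surjective $\dagger$-SMC-functor, and in the compact closed setting it is a $\dagger$-CC-functor. Theorem~\ref{thm:ground} will then deliver that $\cdisc{\interp{\cdot}}:\cdisc{\mathcal{G}}\to\cdisc{\mathbf{C}}$ is faithful and surjective, and Lemma~\ref{thm:cpenv} will rewrite the target as $\operatorname{CPM}(\mathbf{C})$ since $\mathbf{C}$ has enough isometries.

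The one nontrivial verification is that $\ciso{\interp{\cdot}}:\ciso{\mathcal{G}}\to\ciso{\mathbf{C}}$ is surjective. I would prove this by a short standard argument: given an isometry $v:A\to B$ in $\mathbf{C}$, surjectivity of $\interp{\cdot}$ yields some $u\in\mathcal{G}[A,B]$ with $\interp{u}=v$, and then
\[
\interp{u^{\dagger}\circ u}\;=\;\interp{u}^{\dagger}\circ\interp{u}\;=\;v^{\dagger}\circ v\;=\;1_{A}\;=\;\interp{1_{A}},
\]
so faithfulness of $\interp{\cdot}$ forces $u^{\dagger}\circ u=1_{A}$, i.e.\ $u$ is itself an isometry in $\mathcal{G}$ and a preimage of $v$ in $\ciso{\mathcal{G}}$. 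This uses crucially that $\interp{\cdot}$ preserves the dagger, which is why the $\dagger$-CC assumption on $\mathcal{G}$ is needed.

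With this in hand, I would conclude in two sentences. By Theorem~\ref{thm:ground}, $\cdisc{\interp{\cdot}}:\cdisc{\mathcal{G}}\to\cdisc{\mathbf{C}}$ is surjective and faithful, so $\cdisc{\mathcal{G}}$ is universal and complete for $\cdisc{\mathbf{C}}$. By Lemma~\ref{thm:cpenv}, the hypothesis that $\mathbf{C}$ has enough isometries gives $\cdisc{\mathbf{C}}\simeq\operatorname{CPM}(\mathbf{C})$, and composing the equivalence with $\cdisc{\interp{\cdot}}$ yields the desired universal and complete interpretation of $\cdisc{\mathcal{G}}$ into $\operatorname{CPM}(\mathbf{C})$.

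There is no real obstacle here: the corollary is essentially a packaging result. The only subtle point to flag in the write-up is the use of the $\dagger$ to transfer isometry-hood from the image back to the diagram, which is why the statement insists that $\mathcal{G}$ be a $\dagger$-CC language (and not merely an SMC one) and that the interpretation functor be a $\dagger$-functor.
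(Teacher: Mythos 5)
Your proof is correct and follows essentially the same route as the paper, which simply invokes Theorem~\ref{thm:ground} together with $\cdisc{\mathbf{C}}\simeq\operatorname{CPM}(\mathbf{C})$ from Lemma~\ref{thm:cpenv}. Your explicit verification that $\ciso{\interp{\cdot}}$ is surjective (pulling isometry-hood back along a faithful $\dagger$-functor) is a detail the paper leaves implicit, and it is exactly the right justification.
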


This provides a general recipe. We start by a universal complete graphical language $\mathcal{G}$. We build $\cdisc{\mathcal{G}}$, by Theorem \ref{thm:ground}, $\cdisc{\interp{.}}:\cdisc{\mathcal{G}}\to \cdisc{\mathbf{C}}$ is full and faithful. Furthermore $\cdisc{\mathbf{C}}\simeq \operatorname{CPM}(\mathbf{C})$. $\cdisc{\mathcal{G}}$ as a prop can be presented by adding one new generator $\ground$ to the signature $\Sigma$ and one equation for each isometry of $\mathcal{G}$. In general, if one is provided with a spanning set of the isometries, the number of equations can be drastically reduced. We just need one equation for each element of this set. We then obtain a universal complete graphical language.  

We will now briefly review the ZX-calculus and some of its twin languages. They are all universal and complete for subcategories of $\mathbf{Qubit}$. Each time we will apply the recipe with a well chosen spanning set and provide the additional axioms involving $\ground$. We will not discuss minimality, i.e.~if adding these new axioms can help to simplify others.

\subsection{The ZX-calculus}
The ZX-Calculus was introduced in \cite{interacting} by Coecke and Duncan for pure quantum evolutions. %, so far, there exist two complete axiomatisations for the general ZX-Calculus \cite{HNW,JPV-universal}.% Its generators, interpretation and semantics are given in the appendix at page \pageref{calc:ZX}.
It is a $\dagger$-compact prop generated by:
\[R_Z^{(n,m)}(\alpha):n\to m::\tikzfig{gn-alpha}\qquad\qquad R_X^{(n,m)}(\alpha):n\to m::\tikzfig{rn-alpha}\qquad\qquad H:1\to 1::~~\tikzfig{Hadamard}\]
and the two compositions: spacial ($.\otimes.$) and sequential ($.\circ.$).
The symmetric and compact structure are provided by $\sigma:2\to 2::\tikzfig{crossing}$, $\epsilon:2\to 0::\tikzfig{cup}$ and $\eta:0\to 2::\tikzfig{caps}$.

%A ZX-diagram $D:k\to l$ with $k$ inputs and $l$ outputs is generated by:
%\begin{center}
%	\bgroup
%	\def\arraystretch{2.5}
%	{\begin{tabular}{|cc|cc|}
%			\hline
%			$R_Z^{(n,m)}(\alpha):n\to m$ & \tikzfig{gn-alpha} & $H:1\to 1$ & \tikzfig{Hadamard}\\[4ex]\hline
%			$R_X^{(n,m)}(\alpha):n\to m$ & \tikzfig{rn-alpha} & $e:0\to 0$ & \tikzfig{empty-diagram}\\\hline
%			$\mathbb{I}:1\to 1$ & \tikzfig{single-line} & $\sigma:2\to 2$ & \tikzfig{crossing}\\\hline
%			$\epsilon:2\to 0$ & \tikzfig{cup} & $\eta:0\to 2$ & \tikzfig{caps}\\\hline
%	\end{tabular}}
%	\egroup\\
%	where $n,m\in \mathbb{N}$, $\alpha \in \mathbb{R}$, and the generator $e$ is the empty diagram.
%\end{center}
%and the two compositions: spacial ($.\otimes.$) and sequential ($.\circ.$).
%\begin{itemize}
%	\item Spacial Composition: for any $D_1:a\to b$ and $D_2:c\to d$, $D_1\otimes D_2:a+c\to b+d$ consists in placing $D_1$ and $D_2$ side by side, $D_2$ on the right of $D_1$.
%	\item Sequential Composition: for any $D_1:a\to b$ and $D_2:b\to c$, $D_2\circ D_1:a\to c$ consists in placing $D_1$ on the top of $D_2$, connecting the outputs of $D_1$ to the inputs of $D_2$.
%\end{itemize}

To simplify, the red and green nodes will be represented empty when holding a 0 angle:
\[ \scalebox{0.9}{\tikzfig{gn-empty-is-gn-zero}} \qquad\text{and}\qquad \scalebox{0.9}{\tikzfig{rn-empty-is-rn-zero}} \]
%Also in order to make the diagrams a little less heavy, when $n$ copies of the same sub-diagram occur, we will use the notation $(.)^{\otimes n}$.
%\subsubsection{Rules}

The language is universal \cite{interacting}. So far, it has two complete axiomatisations \cite{HNW,JPV-universal}. One is given in Appendix in Figure \ref{fig:ZX_rules}, but any complete axiomatisation will suffice. Some of the main axioms are:
\[\scalebox{0.9}{\tikzfig{spider-1}\qquad\qquad\tikzfig{b2s}\qquad\qquad\tikzfig{h2}}\]
% Particularly, one can:
%\begin{itemize}
%	\item bend any wire  of a ZX-diagram at will, without changing its semantics. This paradigm -- the so-called \textbf{Only Connectedness Matters} -- derives from the structure of compact closed prop of the language, together with some other neat properties such as:
%	%\begin{lemma}
%	%\label{lem:topology}
%	\[\tikzfig{multiplication-commutative}\qquad\qquad\tikzfig{input-to-output}\]
%	%\[{\tikzfig{bent-wire}}\]
%	%\[{\tikzfig{bent-wire-2}}\]
%	\item apply the axioms to sub-diagrams. If $\zx\vdash D_1=D_2$ then, for any diagram $D$ with the appropriate number of inputs and outputs:
%	\vspace*{-0.8em}
%	\begin{multicols}{2}
%		\begin{itemize}
%		\item $\zx\vdash D_1\circ D = D_2\circ D$
%		\item $\zx\vdash D\circ D_1 = D\circ D_2$
%		\end{itemize}
%		\begin{itemize}
%		\item $\zx\vdash D_1\otimes D = D_2\otimes D$
%		\item $\zx\vdash D\otimes D_1 = D\otimes D_2$
%		\end{itemize}
%	\end{multicols}
%	\vspace*{-0.8em}
%%	\begin{itemize}
%%		\item $\zx\vdash D_1\circ D = D_2\circ D$
%%		\item $\zx\vdash D\circ D_1 = D\circ D_2$
%%		\item $\zx\vdash D_1\otimes D = D_2\otimes D$
%%		\item $\zx\vdash D\otimes D_1 = D\otimes D_2$
%%	\end{itemize}
%	where $\zx\vdash D_1 = D_2$ means that $D_1$ can be transformed into $D_2$ using the axioms of the ZX-Calculus. 
%\end{itemize}

%\subsubsection{Interpretation}

%The standard interpretation of the ZX-diagrams 
ZX-diagrams represent quantum evolutions, so there exists a functor $\interp{.}:\text{ZX}\to \mathbf{Qubit}$, called the \emph{standard interpretation}, which 
associates to any diagram $D:n\to m$ a linear map $\interp{D}:\mathbb{C}^{2^n}\to\mathbb{C}^{2^m}$ inductively defined as follows:\\
{\setlength{\arraycolsep}{2pt}\def\arraystretch{0.7}
\begin{minipage}{\columnwidth}
\titlerule{$\interp{.}$}
$$\interp{D_1\otimes D_2}:=\interp{D_1}\otimes\interp{D_2}\qquad\qquad\interp{D_2\circ D_1}:=\interp{D_2}\circ\interp{D_1}$$
\end{minipage}
$$\interp{\tikzfig{empty-diagram}~}:=\begin{pmatrix}
1
\end{pmatrix} \qquad\qquad
\interp{~\tikzfig{single-line}~~}:= \begin{pmatrix}
1 & 0 \\ 0 & 1\end{pmatrix}\qquad\qquad
\interp{~\tikzfig{Hadamard}~}:= \frac{1}{\sqrt{2}}\begin{pmatrix}1 & 1\\1 & -1\end{pmatrix}$$
$$\interp{\raisebox{-0.25em}{$\tikzfig{cup}$}}:= \begin{pmatrix}
1&0&0&1
\end{pmatrix}\qquad\qquad
\interp{\tikzfig{crossing}}:= \begin{pmatrix}
1&0&0&0\\
0&0&1&0\\
0&1&0&0\\
0&0&0&1
\end{pmatrix} \qquad\qquad
\interp{\raisebox{-0.35em}{$\tikzfig{caps}$}}:= \begin{pmatrix}
1\\0\\0\\1
\end{pmatrix}$$
$$
\interp{\begin{tikzpicture}
	\begin{pgfonlayer}{nodelayer}
	\node [style=gn] (0) at (0, -0) {$\alpha$};
	\end{pgfonlayer}
	\end{tikzpicture}}:=\begin{pmatrix}1+e^{i\alpha}\end{pmatrix}\qquad\qquad
\interp{\tikzfig{gn-alpha}}:=
\annoted{2^m}{2^n}{\begin{pmatrix}
	1 & 0 & \cdots & 0 & 0 \\
	0 & 0 & \cdots & 0 & 0 \\
	\vdots & \vdots & \ddots & \vdots & \vdots \\
	0 & 0 & \cdots & 0 & 0 \\
	0 & 0 & \cdots & 0 & e^{i\alpha}
	\end{pmatrix}}
~~\begin{pmatrix}n+m>0\end{pmatrix} 
$$
For any $n,m\geq 0$ and $\alpha\in\mathbb{R}$:\\
\begin{minipage}{\columnwidth}
	$$\scalebox{0.9}{$\interp{\tikzfig{rn-alpha}}=\interp{~\tikzfig{Hadamard}~}^{\otimes m}\circ \interp{\tikzfig{gn-alpha}}\circ \interp{~\tikzfig{Hadamard}~}^{\otimes n}$}$$ \\
	$\left(\text{where }M^{\otimes 0}=\begin{pmatrix}1\end{pmatrix}\text{ and }M^{\otimes k}=M\otimes M^{\otimes k-1}\text{ for }k\in \mathbb{N}^*\right)$.\\
	\rule{\columnwidth}{0.5pt}
\end{minipage}}\\

%The ZX-Calculus is universal and complete  with respect to this interpretation.

Theorem \ref{thm:ground} provides a recipe for transforming the language for mixed states and CPMs. The resulting language $\cdisc{\zx}$ can be seen as a prop with the generators of the ZX-Calculus, augmented with $\ground$ and with the axiomatisation enriched with $\{\ground\circ D = \ground~|~D^{\dagger}\circ D=I\}$. We actually do not need an infinite axiomatisation. Indeed, the set of isometries of the ZX-Calculus can be finitely generated.

Using ($e^{i\alpha}$, $\ket{0}$, H, $R_{Z}(\alpha)$, CNot) as spanning set of the isometries \cite{nielsen_chuang_2010}, we obtain only five axioms:
%\begin{figure*}[!htb]
%	\centering\boxed{\tikzfig{ground-ket-0}$\qquad\qquad$
%		\tikzfig{ground-H}$\qquad\qquad$
%		\tikzfig{ground-gn}$\qquad\qquad$
%		\tikzfig{ground-cnot}}
%	\label{fig:zx-ground-axioms}
%\end{figure*}
\begin{center}
	\begin{tabular}{|c|}
		\hline\\
		\tikzfig{ground-phase}$\qquad\qquad$
		\tikzfig{ground-ket-0}$\qquad\qquad$
		\tikzfig{ground-H-no-label}\\\\
 		\tikzfig{ground-gn-no-label}$\qquad\qquad$
		\tikzfig{ground-cnot-no-label}\\\\
		\hline
	\end{tabular}
\end{center}

\subsection{The $\frac{\pi}{2}$ fragment of ZX-calculus}

The $\textup{ZX}_{\frac \pi 2}$ is obtained from \zx by restricting
phases $\alpha$ to $\{0,\frac{\pi}{2},\pi,\frac{3\pi}{2}\}$. It is universal and complete for $\mathbf{Stab}$ \cite{pi_2-complete} with the axiomatisation provided in Figure \ref{fig:ZX_rules_clifford} in appendix. Moreover according to Lemma \ref{lem:CPS} $\cdisc{\mathbf{Stab}}$ is an environment structure for $\mathbf{Stab}$. %which by Lemma \ref{lem:CPS} also satisfies the CP-condition. 

The set ($e^{i\alpha}$, $\ket{0}$, $H$, $R_{Z}(\alpha)$, CNot), with
$\alpha$ restricted to multiples of $\frac{\pi}{2}$, remains a
spanning set of isometries in $\mathbf{Stab}$, so adding the same set
of equations than in $\cdisc{\zx}$ will provide a complete axiomatisation for $\cdisc{\textup{ZX}_{\frac \pi 2}}$.

\subsection{The $\mathbf{Clifford{+}T}$ fragment of ZX-calculus}

Restricting $ZX$ to angles multiples of $\pi/4$, we obtain a languages
which is known to be universal and complete for
$\mathbf{Clifford{+}T}$ \cite{JPV}. 
However, as shown by Lemma \ref{lem:CPC}, the semantic category
$\mathbf{Clifford{+}T}$ does not have enough isometries. The discard construction is strictly coarser than $\operatorname{CPM}$ for this fragment. So we leave open the complete axiomatisation of quantum operations for this fragment.

%\subsection{The $\mathrm{\Delta}$ZX-Calculus}
%
%The \dzx-Calculus \cite{zx-toffoli}, is very close to the ZX-Calculus. Indeed, the sole addition to the latter is the ``triangle node''.
%The point of introducing this node was that the resulting language could easily represent the ``Toffoli-Hadamard'' fragment of quantum mechanics. Since it only adds a generator to the language, all the isometries in \dzx can be expressed in the same way than in \zx. So the additional axioms of $\cdisc{\zx}$ also provide a complete axiomatisation for $\cdisc{\dzx}$

\subsection{The ZW-calculus}

The ZW-Calculus was introduced in \cite{zw}, deriving from the GHZ/W-Calculus \cite{ghz-w}, where the  main two generators are two non-equivalent ways to entangle three qubits, the so-called GHZ and W states. The language was made complete for pure quantum mechanics in \cite{HNW}. The generators, rules and interpretation of the calculus are given in the appendix at page \pageref{calc:ZW}. Since CNot is hard to express in this calculus, we choose another set of universal diagrams, more suited to ZW, namely ($e^{i\alpha}$, $\ket1$, $R_Z(\alpha)$, H, CZ $\circ$ SWAP). The resulting rules for $\cdisc{\text{ZW}}$ are:
%\begin{figure*}[!htb]
%	\centering
%	\begin{tabular}{|c|}
%		\hline\\
%		\tikzfig{ground-zw-phase}$\qquad\qquad$
%		\tikzfig{ground-zw-ket-1}$\qquad\qquad$
%		\tikzfig{ground-zw-ghz}\\\\
%		\tikzfig{ground-zw-h}$\qquad\qquad$
%		\tikzfig{ground-fermionic-swap}\\\\
%		\hline
%	\end{tabular}
%	\label{fig:zw-ground-axioms}
%\end{figure*}
\begin{center}
	\begin{tabular}{|c|}
		\hline\\
		\tikzfig{ground-zw-phase}$\qquad\qquad$
		\tikzfig{ground-zw-ket-1}$\qquad\qquad$
		\tikzfig{ground-zw-ghz}\\\\
		\tikzfig{ground-zw-h}$\qquad\qquad$
		\tikzfig{ground-fermionic-swap}\\\\
		\hline
	\end{tabular}
\end{center}

\subsection{The ZH-Calculus}

\tikzstyle{gn}=[zxnode ,fill=white]
\tikzstyle{rn}=[zxnode ,fill=gray]
\tikzstyle{H box}=[rectangle,fill=white,draw=black,xscale=1,yscale=1,font=\footnotesize,inner sep=1.2pt,minimum width=0.15cm,minimum height=0.15cm]

The ZH-Calculus was introduced and proved to be complete in \cite{ZH}. A presentation of the language is given in appendix at page \pageref{calc:ZH}. The point of this language is to easily represent hypergraph-states, a generalisation of graph-states, a useful resource for quantum computing.
This language has been specifically designed to easily represent the multi-controlled Z (which constitute the hyperedges in the hypergraph-states). So in particular, CZ and $R_Z(\alpha)$ are easily representable. Up to a scalar, H is also easily doable, and $\interp{X^{(0,1)}}=\ket{0}$. Hence, choosing ($e^{i\alpha}$, $\ket{0}$, H, $R_{Z}(\alpha)$, CZ) as spanning set, we only need the axioms:
%\begin{figure*}[!htb]
%	\centering
%	\begin{tabular}{|c|}
%		\hline\\
%		\tikzfig{ground-ZH-phase}$\qquad\qquad$
%		\tikzfig{ground-ZH-ket-0}$\qquad\qquad$
%		\tikzfig{ground-ZH-H}\\\\
%		\tikzfig{ground-ZH-RZ}$\qquad\qquad$
%		\tikzfig{ground-ZH-CZ}\\\\
%		\hline
%	\end{tabular}
%	\label{fig:zh-ground-axioms}
%\end{figure*}
\begin{center}
	\begin{tabular}{|c|}
		\hline\\
		\tikzfig{ground-ZH-phase}$\qquad\qquad$
		\tikzfig{ground-ZH-ket-0}$\qquad\qquad$
		\tikzfig{ground-ZH-H}\\\\
		\tikzfig{ground-ZH-RZ}$\qquad\qquad$
		\tikzfig{ground-ZH-CZ}\\\\
		\hline
	\end{tabular}
\end{center}

%\bibliography{../quantum-ref}

\appendix

\section{Proofs}

\begin{proof}[Proof of Lemma \ref{lem:pur}]\phantomsection\label{prf:pur}
	Given any morphism $f:\mathbf{C}^{\sground}[A,B]$, we take a diagram representing it. Using the naturality of the symmetry we obtain an equivalent diagram in $\cdisc{\mathbf{C}}$ where all the discards have been pushed to the bottom right: \tikzfig{funground}.
	There are no discards among the components of the part $f''$ of this diagram. So it represents a morphism in the range of $\iota_{\mathbf{C}}$ and then there is an $f':\mathbf{C}[A,B\otimes X]$ such that:
	\begin{equation*}
	\tikzfig{ffig0}=\tikzfig{ffig1}
	\end{equation*}
	In other words, $f'$ is a purification of $f$.
\end{proof}

\begin{proof}[Proof of Lemma~\ref{lem:unicity}]\phantomsection\label{prf:unicity}

\phantom{a}
  
	\begin{itemize}
		
		\item[$(\Rightarrow)$] It is enough to show $f\iso g \Rightarrow \tikzfig{lm0}=\tikzfig{lm1}$ since equality is transitive.
		
		$f\iso g \Leftrightarrow$ there are two isometries $u:X\to Z$ and $v:Y\to Z$ such that $\tikzfig{simis0}=\tikzfig{simis1}$ and then:
		
		\begin{align*}
		&\tikzfig{simis0}=\tikzfig{simis1} \Rightarrow \tikzfig{iosimis0}=\tikzfig{iosimis1}\\ &\Rightarrow \tikzfig{giosimis0}=\tikzfig{giosimis1} \Rightarrow \tikzfig{lm0}=\tikzfig{lm1}
		\end{align*}
		
		\item[$(\Leftarrow)$] We have
                  $\tikzfig{lm0}=\tikzfig{lm1}$ in
                  $\mathbf{C}^{\sground}$.
                  To do the proof, we will have to go back to the
                  definition of the category $\mathbf{C}^{\sground}$
                  as a pushout.
                  Recall that two terms are equal if one can
                  rewrite one into the other  using the equations
                  defining $\cdisc{\mathbf{C}}$.

                  We can assume that, among those steps, the only one
                  involving discards are isometry
                  deletion/creation. Diagramatically this amounts to say that
                  the discards are never moved, in fact one can always
                  moves the other morphisms to make them interact with
                  the discards.

                  Doing this, we ensure that all intermediary diagrams in the chain
                  of equations are of the form $\tikzfig{pk}$ for some
                  $k$.
                  Therefore, to prove the result for a chain of
                  equations of arbitrary size, it is enough to do it just
                  for one step of rewriting.

                  Consider then this step of rewriting. There are two
                  cases. Either we have used an equation which, by
                  identification, can be seen as an equation of
                  $\mathbf{C}$, that is which involves no discards.
                  Then by functoriality of $\iota_{C}$ we recover that
                  $f=g$ and therefore $f\iso g$.                
                  Or the equation involves a discard which has deleted an isometry $u$. Then one of the upper part, let's say $\iota_{C}(f)$, can be written %$\iota_{C}(f)=\iota_{C}(g)\circ 1_{1}\otimes u$.
		\def\fig{simis}$\begin{tikzpicture}[scale=0.5]
	\begin{pgfonlayer}{nodelayer}
		\node [style=none] (0)  at (-1.0, 0.75) {};
		\node [style=none] (1)  at (1.0, 0.75) {};
		\node [style=none] (2)  at (-1.0, -0.25) {};
		\node [style=none] (3)  at (1.0, -0.25) {};
		\node [style=none] (4)  at (-0.5, -0.25) {};
		\node [style=none] (5)  at (-0.5, -1.5) {};
		\node [style=none] (6)  at (0.0, 0.75) {};
		\node [style=none] (7)  at (0.0, 1.5) {};
		\node [style=none] (8)  at (0.0, 0.25) {$\iota_{\textbf{C}}(f)$};
		\node [style=none] (9)  at (0.5, -0.25) {};
		\node [style=none] (16)  at (0.5, -1.5) {};
	\end{pgfonlayer}
	\begin{pgfonlayer}{edgelayer}
		\draw (0.center) to (1.center);
		\draw (1.center) to (3.center);
		\draw (2.center) to (0.center);
		\draw (3.center) to (2.center);
		\draw (4.center) to (5.center);
		\draw (7.center) to (6.center);
		\draw (9.center) to (16.center);
	\end{pgfonlayer}
\end{tikzpicture}\eq{}\begin{tikzpicture}[scale=0.5]
	\begin{pgfonlayer}{nodelayer}
		\node [style=none] (19)  at (-1.0, 1.0) {};
		\node [style=none] (20)  at (1.0, 1.0) {};
		\node [style=none] (21)  at (-1.0, 0.0) {};
		\node [style=none] (22)  at (1.0, 0.0) {};
		\node [style=none] (23)  at (-0.5, 0.0) {};
		\node [style=none] (24)  at (-0.5, -1.5) {};
		\node [style=none] (25)  at (0.0, 1.0) {};
		\node [style=none] (26)  at (0.0, 1.5) {};
		\node [style=none] (27)  at (0.0, 0.5) {$\iota_{\textbf{C}}(g)$};
		\node [style=none] (28)  at (0.5, 0.0) {};
		\node [style=none] (29)  at (0.0, -0.25) {};
		\node [style=none] (30)  at (1.0, -0.25) {};
		\node [style=none] (31)  at (1.0, -1.25) {};
		\node [style=none] (32)  at (0.0, -1.25) {};
		\node [style=none] (33)  at (0.5, -1.25) {};
		\node [style=none] (34)  at (0.5, -1.5) {};
		\node [style=none] (35)  at (0.5, -0.25) {};
		\node [style=none] (36)  at (0.5, -0.75) {$u$};
	\end{pgfonlayer}
	\begin{pgfonlayer}{edgelayer}
		\draw (19.center) to (20.center);
		\draw (20.center) to (22.center);
		\draw (21.center) to (19.center);
		\draw (22.center) to (21.center);
		\draw (23.center) to (24.center);
		\draw (26.center) to (25.center);
		\draw (28.center) to (35.center);
		\draw (29.center) to (30.center);
		\draw (30.center) to (31.center);
		\draw (31.center) to (32.center);
		\draw (32.center) to (29.center);
		\draw (33.center) to (34.center);
	\end{pgfonlayer}
\end{tikzpicture}$.
		But $u$ being an isometry, there exists $u'$ in $\mathbf{C}$ such that $\iota_{\mathbf{C}}(u')=u$. Hence, we have %$f=g\circ 1_{1}\otimes u$
		$\begin{tikzpicture}[scale=0.5]
	\begin{pgfonlayer}{nodelayer}
		\node [style=none] (38)  at (-1.0, 0.75) {};
		\node [style=none] (39)  at (1.0, 0.75) {};
		\node [style=none] (40)  at (-1.0, -0.25) {};
		\node [style=none] (41)  at (1.0, -0.25) {};
		\node [style=none] (42)  at (-0.5, -0.25) {};
		\node [style=none] (43)  at (-0.5, -1.5) {};
		\node [style=none] (44)  at (0.0, 0.75) {};
		\node [style=none] (45)  at (0.0, 1.5) {};
		\node [style=none] (46)  at (0.0, 0.25) {$f$};
		\node [style=none] (47)  at (0.5, -0.25) {};
		\node [style=none] (48)  at (0.5, -1.5) {};
	\end{pgfonlayer}
	\begin{pgfonlayer}{edgelayer}
		\draw (38.center) to (39.center);
		\draw (39.center) to (41.center);
		\draw (40.center) to (38.center);
		\draw (41.center) to (40.center);
		\draw (42.center) to (43.center);
		\draw (45.center) to (44.center);
		\draw (47.center) to (48.center);
	\end{pgfonlayer}
\end{tikzpicture}\eq{}\begin{tikzpicture}[scale=0.5]
	\begin{pgfonlayer}{nodelayer}
		\node [style=none] (50)  at (-1.0, 1.0) {};
		\node [style=none] (51)  at (1.0, 1.0) {};
		\node [style=none] (52)  at (-1.0, 0.0) {};
		\node [style=none] (53)  at (1.0, 0.0) {};
		\node [style=none] (54)  at (-0.5, 0.0) {};
		\node [style=none] (55)  at (-0.5, -1.5) {};
		\node [style=none] (56)  at (0.0, 1.0) {};
		\node [style=none] (57)  at (0.0, 1.5) {};
		\node [style=none] (58)  at (0.0, 0.5) {$g$};
		\node [style=none] (59)  at (0.5, 0.0) {};
		\node [style=none] (60)  at (0.0, -0.25) {};
		\node [style=none] (61)  at (1.0, -0.25) {};
		\node [style=none] (62)  at (1.0, -1.25) {};
		\node [style=none] (63)  at (0.0, -1.25) {};
		\node [style=none] (64)  at (0.5, -1.25) {};
		\node [style=none] (65)  at (0.5, -1.5) {};
		\node [style=none] (66)  at (0.5, -0.25) {};
		\node [style=none] (67)  at (0.5, -0.75) {$u'$};
	\end{pgfonlayer}
	\begin{pgfonlayer}{edgelayer}
		\draw (50.center) to (51.center);
		\draw (51.center) to (53.center);
		\draw (52.center) to (50.center);
		\draw (53.center) to (52.center);
		\draw (54.center) to (55.center);
		\draw (57.center) to (56.center);
		\draw (59.center) to (66.center);
		\draw (60.center) to (61.center);
		\draw (61.center) to (62.center);
		\draw (62.center) to (63.center);
		\draw (63.center) to (60.center);
		\draw (64.center) to (65.center);
	\end{pgfonlayer}
\end{tikzpicture}$ in $\mathbf{C}$. It follows that $f\iso g$.
	\end{itemize}
	
\end{proof}

\begin{proof}[Proof of Lemma \ref{lem:inclu}]\phantomsection\label{prf:inclu}
	Since $\cp  $ is transitive it is enough to show that $\iso $ $\subseteq$ $\cp  $.
	Let $f: A \rightarrow B\otimes X$ and $g:A \rightarrow
        B\otimes Y$ s.t. 	$f\iso g$.	
Then there are two isometries $u:X\to Z$ and $v:Y\to Z$ such that $\tikzfig{simis0}=\tikzfig{simis1}$ and then:
	
%\NOTEs{A mettre avec des daggers}
	\begin{equation*}
	\tikzfig{inc__0}\eq{}\tikzfig{inc__1}\eq{}\tikzfig{inc__2}\eq{}\tikzfig{inc__3}
	\end{equation*} 
	
	So $f\cp  g$.
\end{proof}

\begin{proof}[Proof of Lemma \ref{thm:cpenv}]\phantomsection\label{prf:cpenv}

	[$(i)\Leftrightarrow (ii)$] First $\mathbf{C}^{\sground}$ has the same object as $\mathbf{C}$ and $\iota_{\mathbf{C}}:\mathbf{C}\to \overline{\mathbf{C}}$ is a SM-functor. We need to check the three conditions hold:
	\begin{itemize}
		\item[$\bullet$] Since $\iota_{\mathbf{C}_{iso}^{\term{}}}$ is strict monoidal one has:
		\begin{align*}
		\overset{I}{\ground}&=\iota_{\mathbf{C}_{iso}^{\term{}}}(\term{I})=\iota_{\mathbf{C}_{iso}^{\term{}}}(id_{I})=id_{I}\\
		\overset{A}{\ground}\otimes\overset{B}{\ground}&=\iota_{\mathbf{C}_{iso}^{\term{}}}(\term{A})\otimes\iota_{\mathbf{C}_{iso}^{\term{}}}(\term{B})=\iota_{\mathbf{C}_{iso}^{\term{}}}(\term{A}\otimes\term{B})\\&=\iota_{\mathbf{C}_{iso}^{\term{}}}(\term{A\otimes B})=\overset{A\otimes B}{\ground}
		\end{align*}
		So the first condition is satisfied.\\
		\item[$\bullet$] The second condition is Lemma \ref{lem:pur}.
		\item[$\bullet$] According to Lemma \ref{lem:inclu}, $\isop$ $\subseteq$ $\cp  $, thus the third condition is satisfied if and only if $\cp  $ $\subseteq$ $\isop$.  
	\end{itemize}
	
	[$(i)\Leftrightarrow (ii)$] Direct consequence of the fact that $\mathbb D$ is an environment structure for $\mathbb C$ iff $D$ is equivalent to $CPM(\mathbb C)$ \cite{coecke2016pictures}. 
	\end{proof}

\iffalse

\begin{proof}[Proof of Lemma \ref{lem:CPfull}]\phantomsection\label{prf:CPfull}
	$\mathbf{D}$ is a full subcategory of $\mathbf{C}$ so there is a full and faithful $\dagger$-SMC-functor $F:\mathbf{D}\to \mathbf{C}$. Let $f$ and $g$ be morphisms of $\mathbf{D}$ such that $f\cp g$?. Then $F(f)\cp F(g)$ since $F$ is a $\dagger$-SMC-functor. $\mathbf{C}$ satisfies the CP-condition so $F(f)\isop F(g)$ and by Lemma \ref{lem:Fis} one finally has $f \isop g$. Thus $\mathbf{D}$ satisfies the CP-condition.
\end{proof}
\fi

\begin{proof}[Proof of Proposition \ref{lem:CPFH}]\phantomsection\label{prf:CPFH}
	Let $f:A\to B \otimes X$ and $g:A\to B\otimes Y$ be two linear maps such that $f\cp g$. By definition: $\tikzfig{simcpm0}=\tikzfig{simcpm1}$. It follows that the two superoperators $\rho\mapsto \tr_{X}(f^{\dagger}\rho f)$ and $\rho\mapsto \tr_{Y}(g^{\dagger}\rho g)$ are equal and then by the Stinespring dilation theorem (see for example \cite{huotuniversal}), there are isometries $u$ and $v$ such that $\tikzfig{simis0}=\tikzfig{simis1}$. In other words $f\iso g$. This shows that $\cp \subseteq \iso$ which is even stronger than the CP-condition. From Lemma \ref{lem:inclu} it follows that $\isop \subseteq \iso$. 
\end{proof}

\begin{proof}[Proof of Proposition \ref{lem:CPQ}]\label{prf:CPQ}
It suffices to remark that, in the preceding proof for FHilb, we might
suppose wlog that $u$ and $v$ have codomain of the form
$\mathbb{C}^{2^n}$, by postcomposing them if necessary with an
isometry from $\mathbb{C}^m$ to $\mathbb{C}^{2^n}$.

Therefore $f \cp g$ on $\mathbf{Qubit}$ implies $f \iso g$ on $\mathbf{Qubit}$.
  
\end{proof}
\begin{proof}[Proof of Proposition \ref{lem:CPS}]\phantomsection\label{prf:CPS}
	First of all, since $\mathbf{Stab}$ is compact closed, using the map/state duality, proving the result for states in sufficient. Since all the non-zero scalar are invertible in $\mathbf{Stab}$ we can furthermore without loss of generality focusing on normalized states.  
	Consider two states $d_{1}:A\otimes X$ and $d_{2}:A\otimes Y$ in $\mathbf{Stab}$ such that $d_1 \cp d_2$. The point of focusing on normalized states is that we can decompose using \cite{clifford-state-partition} so that $\tikzfig{di}=\tikzfig{decstab}$ where $A_i$ and $B_i$ are unitaries in $\mathbf{Stab}$. Defining $A_{i}'\eqqcolon\tikzfig{ai}$ we have that $d_{i}\iso A_{i}'$ since we just have deleted isometries. So, by transitivity, to prove $d_{1}\isop d_{2}$ we just have to show $A_{1}'\iso A_{2}'$. But since $d_1 \cp d_2$ in $\mathbf{Stab}$ we also have $d_1 \cp d_2$ in $\mathbf{FHilb}$ and so by Lemma \ref{lem:CPFH}, $d_1 \isop d_2$ in $\mathbf{FHilb}$. By transitivity $A_{1}'\isop A_{2}'$ in $\mathbf{FHilb}$ and so by Lemma \ref{lem:CPFH} $A_{1}'\iso A_{1}'$ in $\mathbf{FHilb}$. So there are two unitaries $u$ and $v$ such that $\tikzfig{a1u}=\tikzfig{a2v}$. In $\mathbf{FHilb}$ any isometry can be written as an unitary with ancillas. In other words there is an unitary $u'$ such that: $\tikzfig{u}=\tikzfig{u_}$, composing by $u'^{\dagger}$ on both side and denoting $w=u'^\dagger\circ v$ one has: $\tikzfig{a1}=\tikzfig{a2w}$. It only remains to show that the isometry $w$ is in $\mathbf{Stab}$ since the isometry on left hand side is clearly in it. This is given by: $\tikzfig{a1_}=\tikzfig{w}$ so $A_{1}'\iso A_{1}'$ in $\mathbf{Stab}$ and then $d_1 \cp d_2$.
\end{proof}

\begin{proof}[Proof of Proposition \ref{lem:CPC}]\phantomsection\label{prf:CPC}
First remark that, in any $\dagger$-SMC category, if  $f\isop g$ then there is a
morphism (usually not an isometry) $w$ such that $\tikzfig{f}=\tikzfig{gw}$.

This is true if $f \iso g$: From
$\tikzfig{simis0}=\tikzfig{simis1}$ we immediately get $\tikzfig{simis-inv}$.

The result then follows by a straightforward induction.

Now take $\phi = 1+2i$ and $\phi^* = 1 - 2i$.
The scalars are in $\mathbf{Clifford{+}T}$ since their entries are in
$\mathbb Z[i, \frac 1{\sqrt 2}]$, and are clearly $\cp$
equivalent. Now let's suppose $ 1+2i \isop 1-2i$.
Then by the previous remark, there exists a morphism $u$ such that
$(1-2i)u=1+2i$. But the only possibility for $u$ is $\frac{4i-3}{5}$,
which is not in $\mathbb Z[i, \frac 1{\sqrt 2}]$, a contradiction.
\end{proof}

\begin{proof}[Proof of Lemma \ref{lem:Fis}]\phantomsection\label{prf:Fis}
        First, remark that if $F(\ell) \iso k$, then there exists $h$
        s.t. $F(h) = k$. Indeed, under the hypothesis, there are two
        isometries $u$ and $v$ such that:
        $\tikzfig{simisf6}=\tikzfig{simisf7}$. Since $\ciso{F}$
        is surjective, there are two isometries $a$ and $b$ such that $F(a)=u$ and $f(b)=v$.
	\begin{equation*}
	\tikzfig{simisf8}=\tikzfig{simisf9} \Rightarrow \tikzfig{simisf10}=\tikzfig{simisf11} \Rightarrow F\left(\tikzfig{simisf12}\right)=\tikzfig{simisf11}
	\end{equation*}
	
	The first implication uses the fact that $F(b)$ is an
        isometry. So $k$ is
        in the image of $F$.

        By the first remark, it is therefore sufficient to prove the
        result if $F(f) \iso F(g)$.
	Since $\ciso{F}$ is surjective, there are two isometries $a$
        and $b$ such that $F(a)=u$ and $f(b)=v$. Therefore
	
	\begin{equation*}
	\tikzfig{simisf2}=\tikzfig{simisf3} \Rightarrow F\left(\tikzfig{simisf5}\right)=F\left(\tikzfig{simisf4} \right) \Rightarrow \tikzfig{simisf5}=\tikzfig{simisf4}
	\end{equation*}
	
The second one holds because $F$ is faithful. The last equation is the definition of $f\iso  g$.
\end{proof}

\begin{proof}[Proof of Theorem \ref{thm:ground}]\phantomsection\label{prf:ground}
	Let $f$ and $g$ be two morphisms such that $\cdisc{F}(f)=\cdisc{F}(g)$. By Lemma \ref{lem:pur}, $f$ and $g$ can be purified: 
	
	\begin{equation*}
	\cdisc{F}\left(\tikzfig{faithp0}\right)=\cdisc{F}\left(\tikzfig{faithp1}\right) \Rightarrow \tikzfig{faithp2}=\tikzfig{faithp3}
	\end{equation*}
	
	The implication follows from the upper face of the commutative cube.
	By Lemma \ref{lem:unicity} we have $F(f')\isop F(g')$. By Lemma \ref{lem:Fis}, $f'\isop g'$. Then Lemma \ref{lem:unicity} gives $\tikzfig{faithp0}=\tikzfig{faithp1}$ that is $f=g$, $F$ is faithful.
\end{proof}

\section{ZW and ZH Calculi}

\subsection{ZW-calculus}
\label{calc:ZW}
%\subsubsection{Generators}

ZW-diagrams are generated by:
\begin{center}
	\bgroup
	\def\arraystretch{2.5}
	{\begin{tabular}{|cc|cc|}
			\hline
			$Z^{(n,m)}(r):n\to m$ & \vphantom{\scalebox{1.15}{\tikzfig{ZW-gen-GHZ-r}}}\tikzfig{ZW-gen-GHZ-r} & $\mathbb{I}:1\to 1$ & \tikzfig{single-line}\\\hline
			$W^{(n,m)}:n\to m$ & \vphantom{\scalebox{1.15}{\tikzfig{ZW-gen-W}}}\tikzfig{ZW-gen-W} & $e:0\to 0$ & \tikzfig{empty-diagram}\\\hline
			$\sigma:2\to 2$ & \tikzfig{crossing} & $\sigma':2\to 2$ & \tikzfig{zw-cross}\\\hline
			$\epsilon:2\to 0$ & \tikzfig{cup} & $\eta:0\to 2$ & \tikzfig{caps}\\\hline
	\end{tabular}}
	\egroup\\
	where $n,m\in \mathbb{N}$, $r \in \mathbb{C}$, and the generator $e$ is the empty diagram.
\end{center}
and the two compositions: spacial ($.\otimes.$) and sequential ($.\circ.$).
%\subsubsection{Interpretation}

The standard interpretation is defined as:
\titlerule{$\interp{.}$}
$$ \interp{D_1\otimes D_2}:=\interp{D_1}\otimes\interp{D_2} \qquad \qquad
\interp{D_2\circ D_1}:=\interp{D_2}\circ\interp{D_1}$$
$$\interp{\tikzfig{empty-diagram}~}:=\ket{}
%\begin{pmatrix}1\end{pmatrix} 
\qquad\qquad
\interp{~\tikzfig{single-line}~~}:= 
%\sum\ketbra{i}{i}
\begin{pmatrix}1 & 0 \\ 0 & 1\end{pmatrix}
$$
$$ 
\interp{\tikzfig{crossing}}:= 
%\sum\ketbra{ji}{ij}
\begin{pmatrix}
1&0&0&0\\
0&0&1&0\\
0&1&0&0\\
0&0&0&1
\end{pmatrix} 
\qquad\qquad
\interp{\tikzfig{zw-cross}}:= 
%\sum(-1)^{ij}\ketbra{ji}{ij}
\begin{pmatrix}
1&0&0&0\\
0&0&1&0\\
0&1&0&0\\
0&0&0&-1
\end{pmatrix}
$$
$$
\interp{\raisebox{-0.4em}{$\tikzfig{caps}$}}:= 
%\sum\ket{ii}%\ket{00}+\ket{11}
\begin{pmatrix}
1\\0\\0\\1
\end{pmatrix}
\qquad\qquad
\interp{\raisebox{-0.3em}{$\tikzfig{cup}$}}:= 
%\sum\bra{ii}%\bra{00}+\bra{11}
\begin{pmatrix}
1&0&0&1
\end{pmatrix}
$$
$$
\interp{\tikzfig{ZW-gen-GHZ-r}} = 
%\ketbra{0^m}{0^n} + r \ketbra{1^m}{1^n}
\begin{pmatrix}
1&0&\cdots&0&0\\
0&0&\cdots&0&0\\
\vdots & \vdots & \ddots &\vdots & \vdots\\
0&0&\cdots&0&0\\
0&0&\cdots&0&r
\end{pmatrix}
$$
\begin{minipage}{\columnwidth}
%	$$
%	\interp{\tikzfig{ZW-gen-W}} = \sum_k \ketbra{0^k10^{m\text{-}k\text{-}1}}{0^n} + \sum_k \ketbra{0^m}{0^k10^{n\text{-}k\text{-}1}}
%	$$
	$$
	\interp{\tikzfig{W-1-1}}:= \begin{pmatrix}
	0&1\\1&0
	\end{pmatrix} \qquad \qquad
	\interp{\tikzfig{W-1-2}}:= \begin{pmatrix}
	0&1\\1&0\\1&0\\0&0
	\end{pmatrix}
	%\qquad
	%\interp{\tikzfig{Z-1-1}}:= \begin{pmatrix}
	%1&0\\0&-1
	%\end{pmatrix} \qquad 
	%\interp{\tikzfig{Z-2-1}}:= \begin{pmatrix}
	%1&0&0&0\\0&0&0&-1
	%\end{pmatrix}
	$$
	\rule{\columnwidth}{0.5pt}
\end{minipage}~\\
%\subsubsection{Rules}

\begin{figure*}[!htb]
	\def\scale{0.7}
	\centering
	\begin{tabular}{|ccc|}
		\hline
		&&\\
		\scalebox{\scale}{\tikzfig{ZW-rule-0-no-braid}} &  & \scalebox{\scale}{\tikzfig{ZW-rule-1}} \\
		&&\\
		\scalebox{\scale}{\tikzfig{ZW-rule-2-no-braid}} && \scalebox{\scale}{\tikzfig{ZW-rule-3}} \\
		&&\\
		\scalebox{\scale}{\tikzfig{ZW-rule-4}} && \scalebox{\scale}{\tikzfig{ZW-rule-5-no-braid}} \\
		&&\\
		\multicolumn{3}{|c|}{\scalebox{\scale}{\tikzfig{ZW-rule-6-no-braid}} $\qquad$ \scalebox{\scale}{\tikzfig{ZW-rule-X-no-braid}}}\\
		&&\\
		\multicolumn{3}{|c|}{\begin{tabular}{ccc}
				\scalebox{\scale}{\tikzfig{ZW-rule-7-no-braid}} &$\qquad$& \scalebox{\scale}{\tikzfig{reidmeister-3}}
		\end{tabular}} \\
		&&\\
%		\multicolumn{3}{|c|}{
%			\scalebox{\scale}{\tikzfig{ground-zw-ket-1}$\qquad\qquad$
%				\tikzfig{ground-zw-ghz}$\qquad\qquad$
%				\tikzfig{ground-zw-h}$\qquad\qquad$
%				\tikzfig{ground-fermionic-swap}}}\\
%		&&\\
		\hline
	\end{tabular}
	\caption{Set of rules for the ZW-Calculus. $r,s\in\mathbb{C}$.}
	\label{fig:ZW_rules}
\end{figure*}

\subsection{ZH-calculus}
\label{calc:ZH}

\tikzstyle{gn}=[zxnode ,fill=white]
\tikzstyle{rn}=[zxnode ,fill=gray]
\tikzstyle{H box}=[rectangle,fill=white,draw=black,xscale=1,yscale=1,font=\footnotesize,inner sep=1.2pt,minimum width=0.15cm,minimum height=0.15cm]

The ZH-diagrams are generated by:
\begin{center}
	\bgroup
	\def\arraystretch{2.5}
	{\begin{tabular}{|cc|cc|}
			\hline
			$Z^{(n,m)}:n\to m$ & \vphantom{\scalebox{1.15}{\tikzfig{gn-0}}}\tikzfig{gn-0} & $\mathbb{I}:1\to 1$ & \tikzfig{single-line}\\\hline
			$X^{(n,m)}:n\to m$ & \vphantom{\scalebox{1.15}{\tikzfig{rn-0}}}\tikzfig{rn-0} & $e:0\to 0$ & \tikzfig{empty-diagram}\\\hline
			$H^{(n,m)}(a):n\to m$ & \vphantom{\scalebox{1.15}{\tikzfig{ZH-gen-H-a}}}\tikzfig{ZH-gen-H-a} & $\neg:1\to1$ & \tikzfig{ZH-gen-neg} \\\hline
			$\sigma:2\to 2$ & \tikzfig{crossing} & $\epsilon:2\to 0$ & \tikzfig{cup}\\\hline
			$\eta:0\to 2$ & \tikzfig{caps}&\multicolumn{2}{c}{}\\\cline{1-2}
	\end{tabular}}
	\egroup\\
	where $n,m\in \mathbb{N}$, $a \in \mathbb{C}$, and the generator $e$ is the empty diagram.
\end{center}
and the two compositions: spacial ($.\otimes.$) and sequential ($.\circ.$).

The language was introduced to allow a simple representation of hypergraph states and multi-controlled-Z gates. To do so it features a node called $H$-spider, which can be seen as a generalisation of the Hadamard gate. By convention, when no parameter is specified in $H$, the implicit parameter taken is $-1$: \tikzfig{ZH-convention}.

\begin{figure*}[!htb]
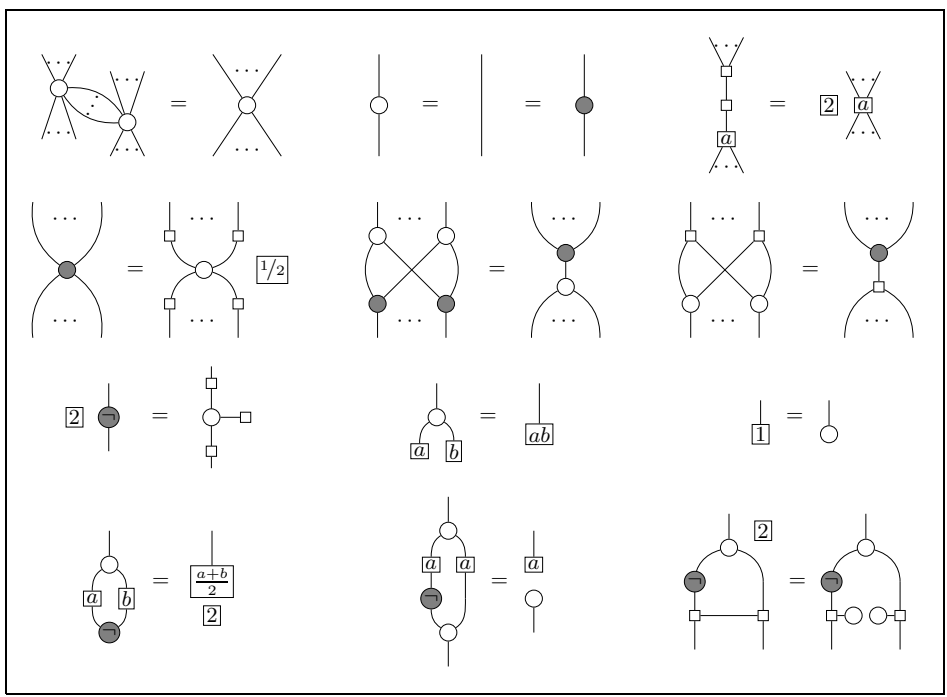

	\centering
	\hypertarget{r:ZH}{}
	\scalebox{0.9}{
	\begin{tabular}{|@{~~~}c@{$\qquad\quad$}c@{$\quad\qquad$}c@{~~~}|}
		\hline
		&& \\
		\tikzfig{ZH-rule-ZS1}&
		\tikzfig{ZH-rule-ZS2}&
		\tikzfig{ZH-rule-HS1}\\
		&& \\
		\tikzfig{ZH-rule-red}&
		\tikzfig{ZH-rule-BA1}&
		\tikzfig{ZH-rule-BA2}\\
		&& \\
		\tikzfig{ZH-rule-N}&
		\tikzfig{ZH-rule-M}&
		\tikzfig{ZH-rule-U}\\
		&& \\
		\tikzfig{ZH-rule-A}&
		\tikzfig{ZH-rule-I}&
		\tikzfig{ZH-rule-O}\\
		&& \\
%		\tikzfig{ground-ZH-ket-0}&
%		\tikzfig{ground-ZH-H}&
%		\tikzfig{ground-ZH-CZ}\\
%		&& \\
		\hline
	\end{tabular}}
	\caption[]{Set of rules ZH. (...) denote zero or more wires, while (\protect\rotatebox{45}{\raisebox{-0.4em}{$\cdots$}}) denote one or more wires.
	}
	\label{fig:ZH_rules}
\end{figure*}

The language comes with a standard interpretation defined as:
\titlerule{$\interp{.}$}
$$ \interp{D_1\otimes D_2}:=\interp{D_1}\otimes\interp{D_2} \qquad \qquad
\interp{D_2\circ D_1}:=\interp{D_2}\circ\interp{D_1}$$
$$\interp{\tikzfig{empty-diagram}~}:=\begin{pmatrix}
1
\end{pmatrix} \qquad\qquad
\interp{~\tikzfig{single-line}~~}:= \begin{pmatrix}
1 & 0 \\ 0 & 1\end{pmatrix}
\qquad\qquad
\interp{\tikzfig{ZH-gen-neg}}:=\begin{pmatrix}
0& 1 \\ 1 & 0
\end{pmatrix}$$
$$
\interp{\tikzfig{crossing}}:= \begin{pmatrix}
1&0&0&0\\
0&0&1&0\\
0&1&0&0\\
0&0&0&1
\end{pmatrix} \qquad\qquad
\interp{\raisebox{-0.35em}{$\tikzfig{caps}$}}:= \begin{pmatrix}
1\\0\\0\\1
\end{pmatrix}$$
$$\interp{\raisebox{-0.25em}{$\tikzfig{cup}$}}:= \begin{pmatrix}
1&0&0&1
\end{pmatrix}
\qquad\qquad
\interp{\tikzfig{ZH-gen-H-a}}:= \begin{pmatrix}
1 & \cdots & 1 & 1 \\
\vdots & \ddots & \vdots & \vdots \\
1 & \cdots & 1 & 1 \\
1 & \cdots & 1 & a
\end{pmatrix}$$
$$
\interp{\tikzfig{gn-0}}:=
\begin{pmatrix}
1 & 0 & \cdots & 0 & 0 \\
0 & 0 & \cdots & 0 & 0 \\
\vdots & \vdots & \ddots & \vdots & \vdots \\
0 & 0 & \cdots & 0 & 0 \\
0 & 0 & \cdots & 0 & 1
\end{pmatrix}$$
\begin{minipage}{\columnwidth}
	$$\interp{\tikzfig{rn-0}}:=\frac{1}{2}\interp{~\tikzfig{Hadamard}~}^{\otimes m}\circ \interp{\tikzfig{gn-0}}\circ \interp{~\tikzfig{Hadamard}~}^{\otimes n}
	$$
	\rule{\columnwidth}{0.5pt}
\end{minipage}\\

A set of rules was proposed together with the language (Figure \ref{fig:ZH_rules}). It makes the ZH-Calculus complete for $\mathbf{Qubit}$.

\section{ZX Axiomatisations}

\tikzstyle{gn}=[zxnode ,fill=green]
\tikzstyle{rn}=[zxnode ,fill=red]
\tikzstyle{H box}=[rectangle,fill=yellow,draw=black,xscale=1,yscale=1,font=\footnotesize,inner sep=1.2pt,minimum width=0.15cm,minimum height=0.15cm]

\begin{figure*}[!htb]
	\centering
	\hypertarget{r:rules-c}{}
	\resizebox{\columnwidth}{!}{
	\begin{tabular}{|@{}c@{}|}
		\hline
		\\
		\begin{tabular}{c@{$\qquad\qquad$}c@{$\qquad\qquad$}c}
			\tikzfig{spider-1}&\tikzfig{s2-simple}&\\\\%\tikzfig{induced_compact_structure-2wire}\\\\
			\tikzfig{b1s}&\tikzfig{b2s}&\tikzfig{bicolor_pi_4_eq_empty}\\\\
			\tikzfig{k2s}&\tikzfig{euler-decomp-scalar-free}&\tikzfig{h2}
		\end{tabular}\\\\
		\begin{tabular}{c@{$\qquad\qquad$}c}
			\tikzfig{former-supp}&\tikzfig{commutation-of-controls-general-simplified} \\\\
			\tikzfig{BW-simplified}&\tikzfig{add-axiom-3}
		\end{tabular}\\\\
		\hline
	\end{tabular}}
	\caption[]{Set of rules for the general \zx-Calculus with scalars. All of these rules also hold when flipped upside-down, or with the colours red and green swapped. The right-hand side of (E) is an empty diagram. (...) denote zero or more wires, while (\protect\rotatebox{45}{\raisebox{-0.5em}{$\cdots$}}) denote one or more wires.}
	\label{fig:ZX_rules}
\end{figure*}

%\subsection{ZX-calculus}
%\label{calc:ZX}
%
%\tikzstyle{gn}=[zxnode ,fill=green]
%\tikzstyle{rn}=[zxnode ,fill=red]
%\tikzstyle{H box}=[rectangle,fill=yellow,draw=black,xscale=1,yscale=1,font=\footnotesize,inner sep=1.2pt,minimum width=0.15cm,minimum height=0.15cm]

%\subsection{$\frac{\pi}{2}$-fragment of ZX-calculus}
%\label{calc:ZXc}
%%\subsubsection{Rules}
%
%The $\frac{\pi}{2}$-fragment of the ZX-calculus is a restriction of the ZX-Calculus where all the angles in $R_Z$ and $R_X$ are multiples of $\frac{\pi}{2}$. In Figure \ref{fig:ZX_rules_clifford}, an axiomatisation is given, that makes the fragment complete for $\mathbf{Stab}$ \cite{pi_2-complete,towards-minimal}.

\begin{figure*}[!htb]
	\centering
	\hypertarget{r:rules-clif}{}
	\scalebox{0.9}{
	\begin{tabular}{|@{$\quad$}c@{$\quad$}|}
		\hline
		\\
		\tikzfig{spider-1}$\qquad\qquad$\tikzfig{s2-simple}$\qquad\qquad$\tikzfig{inverse-no-label}\\\\
		\tikzfig{b1s}$\qquad\qquad$\tikzfig{b2s}\\\\
		\tikzfig{euler-decomp-scalar-free}$\qquad\qquad$\tikzfig{h2}$\qquad\qquad$\tikzfig{zero-rule}\\\\
		\hline
	\end{tabular}}
	\caption[]{Set of rules $\zx_{\frac{\pi}{2}}$ for the \frag2 of the \zx-Calculus with scalars. All of these rules also hold when flipped upside-down, or with the colours red and green swapped. The right-hand side of (IV) is an empty diagram. (...) denote zero or more wires, while (\protect\rotatebox{45}{\raisebox{-0.5em}{$\cdots$}}) denote one or more wires.}
	\label{fig:ZX_rules_clifford}
\end{figure*}

\end{document}